\newcommand{\cor}[1]{\textcolor{red}{#1}}
\newcommand{\Haf}{\text{Haf}}
\newcommand{\Per}{\text{Per}}
\newcommand{\poly}{\text{poly}}
\theoremstyle{definition}
\newtheorem{theorem}{Theorem}
\newtheorem{lemma}{Lemma}
\begin{document}
\preprint{APS/123-QED}
\title{Classical simulation of boson sampling based on graph structure}
\author{Changhun Oh}%
\email{changhun@uchicago.edu}
\affiliation{Pritzker School of Molecular Engineering, University of Chicago, Chicago, Illinois 60637, USA}
\author{Youngrong Lim}%
\email{sshaep@kias.re.kr}
\affiliation{School of Computational Sciences, Korea Institute for Advanced Study, Seoul 02455, Korea}
\author{Bill Fefferman}
\email{wjf@uchicago.edu}
\affiliation{Department of Computer Science, University of Chicago, Chicago, Illinois 60637, USA}
\author{Liang Jiang}
\email{liang.jiang@uchicago.edu}
\affiliation{Pritzker School of Molecular Engineering, University of Chicago, Chicago, Illinois 60637, USA}
\date{\today}
\begin{abstract}
Boson sampling is a fundamentally and practically important task that can be used to demonstrate quantum supremacy using noisy intermediate-scale quantum devices.
In this work, we present classical sampling algorithms for single-photon and Gaussian input states that take advantage of a graph structure of a linear-optical circuit.
The algorithms' complexity grows as so-called treewidth, which is closely related to the connectivity of a given linear-optical circuit.
Using the algorithms, we study approximated simulations for local Haar-random linear-optical circuits. 
For equally spaced initial sources, we show that when the circuit depth is less than the quadratic in the lattice spacing, the efficient simulation is possible with an exponentially small error.
Notably, right after this depth, photons start to interfere each other and the algorithms' complexity becomes sub-exponential in the number of sources, implying that there is a sharp transition of its complexity.
Finally, when a circuit is sufficiently deep enough for photons to typically propagate to all modes, the complexity becomes exponential as generic sampling algorithms.
We numerically implement a likelihood test with a recent Gaussian boson sampling experiment and show that the treewidth-based algorithm with a limited treewidth renders a larger likelihood than the experimental data.
\end{abstract}

\maketitle
Sampling from the probability distributions of random quantum circuits is one of the problems to demonstrate quantum supremacy using noisy intermediate-scale quantum devices \cite{bremner2011classical, fefferman2015power, boixo2018characterizing, preskillnisq, arute2019quantum}.
Boson sampling (BS) is one such sampling problem using linear-optical devices believed to be hard to classically simulate under some plausible assumptions \cite{aaronson2011computational, hamilton2017gaussian}. 
While a scale of experimental BS grows rapidly due to its importance \cite{zhong2019experimental, zhong2020quantum, zhong2021phase}, classical simulation algorithms taking advantage of current BS experiments' limitations are still restricted.
Photon loss and distinguishability of photons are representative limitations, which have been extensively studied recently and shown to be detrimental to quantum advantages \cite{aaronson2016bosonsampling, oszmaniec2018classical, garcia2019simulating, renema2018classical, moylett2019classically, qi2020regimes, oh2021classical}.
Another limitation of current experiments is that the number of modes is not sufficiently large to reach a collision-free BS, which may also reduce the complexity of classical simulation \cite{clifford2020faster, bulmer2021boundary}.

In this Letter, we focus on limited connectivity of a linear-optical circuit.
In general, typical global Haar-random linear-optical circuits' input and output modes are fully connected, which makes it hard to classically simulate.
One possible implementation of global Haar-random circuits is to prepare local beam-splitter arrays \cite{emerson2005convergence}, which corresponds to the current BS experiments' setup.
However, a deviation from a global Haar-random unitary is apparent in the recent experiments \cite{zhong2019experimental, zhong2020quantum} because either the circuit depth is small or appropriate ensemble of beam splitters are not employed \cite{russell2017direct}.
Hence, there is a chance that the connectivity of the circuit is limited and that sampling from the underlying system may not be as difficult as from a global Haar-random circuit.



We propose classical algorithms using dynamical programming \cite{bodlaender2008combinatorial, cifuentes2016efficient} taking advantage of a given circuit's limited connectivity for single-photon BS (SPBS) and Gaussian BS (GBS) \cite{aaronson2011computational, hamilton2017gaussian}.
Particularly, our algorithms' complexity depends on connectivity of a relevant matrix's graph structure, characterized by the so-called treewidth \cite{cormen2009introduction}.
Since the algorithms' complexity grows as the treewidth instead of the system size, we may be able to sample from some linear-optical circuits of a limited treewidth faster than generic classical algorithms.
By applying our algorithm to local beam-splitter circuits, we analyze how the algorithms' complexity grows as a circuit depth and reveal a hierarchy of the complexity depending on the depth, namely, polynomial, sub-exponential, and exponential regimes. 

\emph{Boson sampling.---}\label{sec:sampling}
Consider an $M$-mode bosonic system consisting of beam-splitter arrays characterized by a unitary matrix $U$ with $N$ identical sources.
Specifically, the unitary matrix $U$ represents the transformation of mode operators $\{\hat{a}_j\}_{j=1}^M$ as
$\hat{a}_j^\dagger\to \hat{U}^\dagger\hat{a}_j^\dagger\hat{U}=\sum_{k=1}^M U_{jk} \hat{a}_k^\dagger$ for a given beam splitter circuit $\hat{U}$.
Let $\mathcal{S}\equiv \{s_i\}_{i=1}^N\subset [M]$ be the set of input modes for identical sources.
If we measure an output state $\hat{\rho}$ after beam splitters with the photon number basis $\hat{\bm m}=\otimes^M_{j=1}\ket{m_j}\bra{m_j}$, the probability of an outcome $\bm{m}=(m_1,...,m_M)$ is given by $P(\bm{m})=\text{Tr}(\hat{\rho}\hat{\bm m})$.
For simplicity, we define an equivalent description of the output as $\bm{r}=(r_1,\dots,r_N)$, where $r_i$'s represent modes that click.
For single-photon state input, the probability is written as~\cite{aaronson2011computational}
\begin{equation}\label{permanent}
    P(\bm{m})
    =\frac{|\Per(U^{\mathcal{S}}_{\bm{r}})|^2}{\bm{m}!}
    =\frac{1}{\bm{m}!}\left|\sum_\sigma \prod^N_{i=1} U_{r_{i},s_{\sigma(i)}}\right|^2,
\end{equation}
where the sum is over all permutations $\sigma$.
Here, $U^{\mathcal{S}}_{\bm{r}}$ is an $N\times N$ matrix obtained by choosing $\mathcal{S}$ columns and $\bm{r}$ rows, and $\Per(U)$ is the permanent of matrix $U$, which is related to counting bipartite perfect matchings in the corresponding graph~\cite{lovasz2009matching}.
Meanwhile, for a squeezed vacuum state input, the probability of an outcome $\bm{m}$ is given by~\cite{hamilton2017gaussian}
\begin{equation}\label{hafnian}
    P(\bm{m})=\frac{|\Haf(B_{\bm{m}})|^2}{\bm{m}!\sqrt{\det(V+\mathbb{1}/2)}},
\end{equation}
where $B_{\bm{m}}$ is a matrix obtained by repeating $i$th column and row of $B\equiv UDU^\text{T}$ for $m_i$ times, and $\Haf(B_{\bm{m}})$ is the hafnian of matrix $B_{\bm{m}}$, which is related to counting perfect matchings in the corresponding graph \cite{barvinok2016combinatorics}.
Here, $D\equiv \oplus_{j=1}^M\tanh{r_j}$, and $V$ is the output state's the covariance matrix.
Squeezing parameters are given by $r_j=r$ for $j\in \mathcal{S}$, and $r_j=0$ otherwise. 

Let us clarify the relation between graphs and BS (see Fig.~\ref{fig:graph} (a)).
To compute the (marginal) probability for an outcome, we consider all possible paths from input photons to the output configuration, which essentially corresponds to interference.
They can be described by all perfect matchings of a bipartite graph of $U_{\bm{r}}^{\mathcal{S}}$ with the input modes $\mathcal{S}$ and output modes $\bm{r}$ being bipartite vertex sets and the paths between them being edges for SPBS. 
For GBS, vertices of a symmetric graph of $B_{\bm{m}}$ consist of an output-photon configuration, and two vertices have an edge if the two photons can come from the same source.
To compute a probability in this case, we consider all possible perfect matchings of output photons, which corresponds to finding sources from which each pair of photons come.
From this observation, when a given unitary matrix's connectivity is limited, we can expect that the number of possible perfect matchings for each outcome is small so that the induced graphs' structure allows to reduce the complexity.

\begin{figure}[t]
\includegraphics[width=240px]{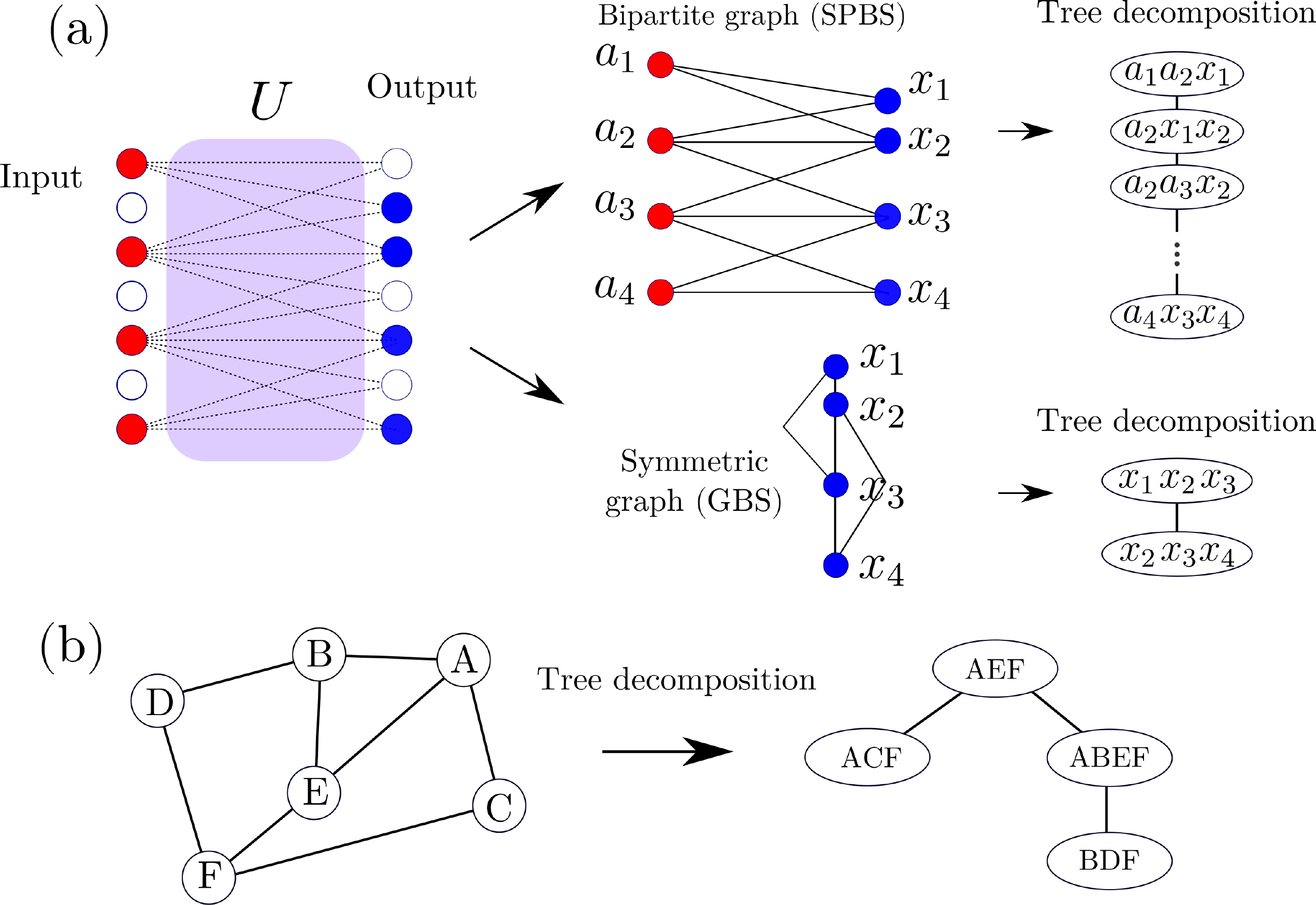}
\caption{(a) Input (red dots) and output (blue dots) photon configuration, corresponding bipartite and symmetric graphs and their tree decompositions of width $w=2$. (b) Graph and its possible tree decomposition of width $w=3$.}
\label{fig:graph}
\end{figure}

\emph{Computing permanent and loop hafnian using dynamical programming.---}\label{sec:treewidth}
Before presenting sampling algorithms, we first introduce classical algorithms computing the permanent and loop hafnian of a matrix.
Here, loop hafnian is generalized hafnian, related to counting perfect matchings including loops \cite{bjorklund2019faster, loophaf}, which is necessary for the sampling algorithm below.
The complexity of the best-known algorithms computing the permanent and loop hafnian of a general $N\times N$ matrix scales as $2^{N}$ and $2^{N/2}$, respectively ~\cite{ryser1963combinatorial,bjorklund2012counting}.
Meanwhile, there are also various algorithms exploiting a matrix's structures~\cite{barvinok1996two,bjorklund2019faster,schwartz2009efficiently}.
A particularly interesting algorithm is dynamical programming that computes permanent \cite{cifuentes2016efficient}.
A high-level idea of the algorithm is to construct tree decomposition of bipartite graph for a given matrix, which reveals the matrix's structure (see Fig.~\ref{fig:graph}).
The algorithm's complexity grows as so-called treewidth, which measures connectivity by exploiting the treelike structure of the graph~\cite{bodlaender2008combinatorial}.
We generalize the treewidth-based algorithm to loop hafnian by using tree decompositions for a given symmetric matrix and present the following lemma, including the result in Ref.~\cite{cifuentes2016efficient} as:
\begin{lemma}\label{lemma:treewidth}
    If the treewidth of a graph representation of an $N\times N$ matrix is $w$, then dynamical programming can compute its permanent and loop hafnian in $O(Nw^22^w)$.
\end{lemma}
We provide the proofs of Lemmas and Theorems in Ref.~\cite{supple}. 
Notably, Lemma~\ref{lemma:treewidth} shows that the complexity's exponent does not scale as the matrix size $N$ but the treewidth $w$. 
Therefore, for some structured matrices, the complexity of computing their permanent or loop hafnian can be highly reduced.
For example, a forest, i.e., disjoint union of trees, has treewidth 1~\cite{harary2018graph}, so the complexity does not grow exponentially as matrix size. 
On the other hand, a complete graph, whose vertices are all connected, has the treewidth $N-1$ ($N$ for bipartite complete graph)~\cite{harary2018graph}.
Note that we recover the same exponent of the algorithm for a general matrix, i.e., $2^N$, for permanent whereas it has a gap for loop hafnian ($2^{N/2}$ for general loop hafnian) \cite{bjorklund2012counting, bjorklund2019faster}.


\emph{Classical sampling algorithms based on treewidth.---}
We now introduce classical sampling algorithms of SPBS and GBS using limited connectivity. 
Although we have algorithms computing permanent or loop hafnian using a given graph's structure, how to use such algorithms for sampling is not clear.
Remarkably, we show that if we employ as a main routine chain rule of marginal probabilities, such as the Clifford-Clifford algorithm~\cite{clifford2018classical} for SPBS, and a recently proposed GBS algorithm \cite{quesada2022quadratic}, and use our dynamical programming to compute permanent or loop hafnian as a subroutine, we can fully utilize the graph structure of a given circuit including computing marginal probabilities~\cite{supple}.
For simplicity, we focus on collision-free events, i.e., $m_i=\{0,1\}$, while we provide algorithms for collisions in Ref.~\cite{supple}.
\begin{theorem} (Classical sampling algorithm) \label{theore:exact}
    If the treewidths of bipartite graphs of $U^{\mathcal{S}}_{\bm{r}}$ are at most $w$ for all possible outcome $\bm{r}$, we can classically simulate SPBS in $O(MN^2w^22^w)$.
    Similarly, if symmetric graphs of $B_{\bm{m}}$ have the treewidths at most $w$ for all outcomes $\bm{m}$, we can classically simulate GBS in $O(MNw^22^w)$.
\end{theorem}
Theorem~\ref{theore:exact} enables us to recover and generalize some previously known results.
One such example is efficient simulability of shallow 1D GBS, i.e., depth $D=O(\log M)$ by using limited bandwidth of the circuit's unitary matrix ~\cite{bandwidth, qi2020efficient}.
Since bandwidth is a special case of treewidth, we recover the result and also find that the result holds for 1D SPBS.
For 2D cases, however, even for a constant depth, we encounter with an output described by a graph including $\sqrt{N}\times \sqrt{N}$ grid, whose treewidth is $w=\sqrt{N}$ \cite{diestel2005graph, supple}.
This is consistent with the recent hardness result of high-dimensional GBS~\cite{deshpande2021quantum}.

\emph{Approximate sampling.---}
When an approximation of a given circuit has limited connectivity, we can expect that an approximate sampling is possible using this structure. 
However, it is not straightforward to apply the same method if we approximate the circuit matrix by a nonunitary matrix because the corresponding process or the output state may no longer be physical.
Also, the chain-rule-based algorithms implicitly assume unitarity of the process or a legitimate quantum state.
We present a method to overcome this by introducing additional virtual $M$ modes to make the process physical again and investigate its approximation error in Ref.~\cite{supple}:
\begin{theorem}(Approximate sampling)\label{theore:approx}
    If a circuit unitary matrix $U$ is approximated by $U-dU$, one can implement sampling with the same complexity up to constant as Theorem~\ref{theore:exact} with an error of $\poly(N,\|dU\|_F^{1/4})$.
\end{theorem}
We assess a simulation's error by total variation distance $\sum_{\bm m}|P(\bm{m})-P_a(\bm{m})|/2$ between an ideal probability distribution $P(\bm{m})$ and a classical algorithm's output probability distribution $P_a(\bm{m})$ and desire an approximation error to be $O(1/\poly(N))$.
In the following section, we study an experimentally relevant physical model, which is local Haar-random circuits.
Since a current GBS experiment does not employ a specialized ensemble to implement a global Haar-random circuit \cite{zhong2020quantum}, its setup can be considered as a typical instance of the model.
Also, it can be interpreted as an extreme case where beam splitters' reflectivities have a large uncertainty.
We emphasize that our approximation method in Theorem \ref{theore:approx} is straightforwardly applicable to similar dynamics (e.g. Ref. \cite{deshpande2018dynamical}).

\begin{figure}[t]
\includegraphics[width=220px]{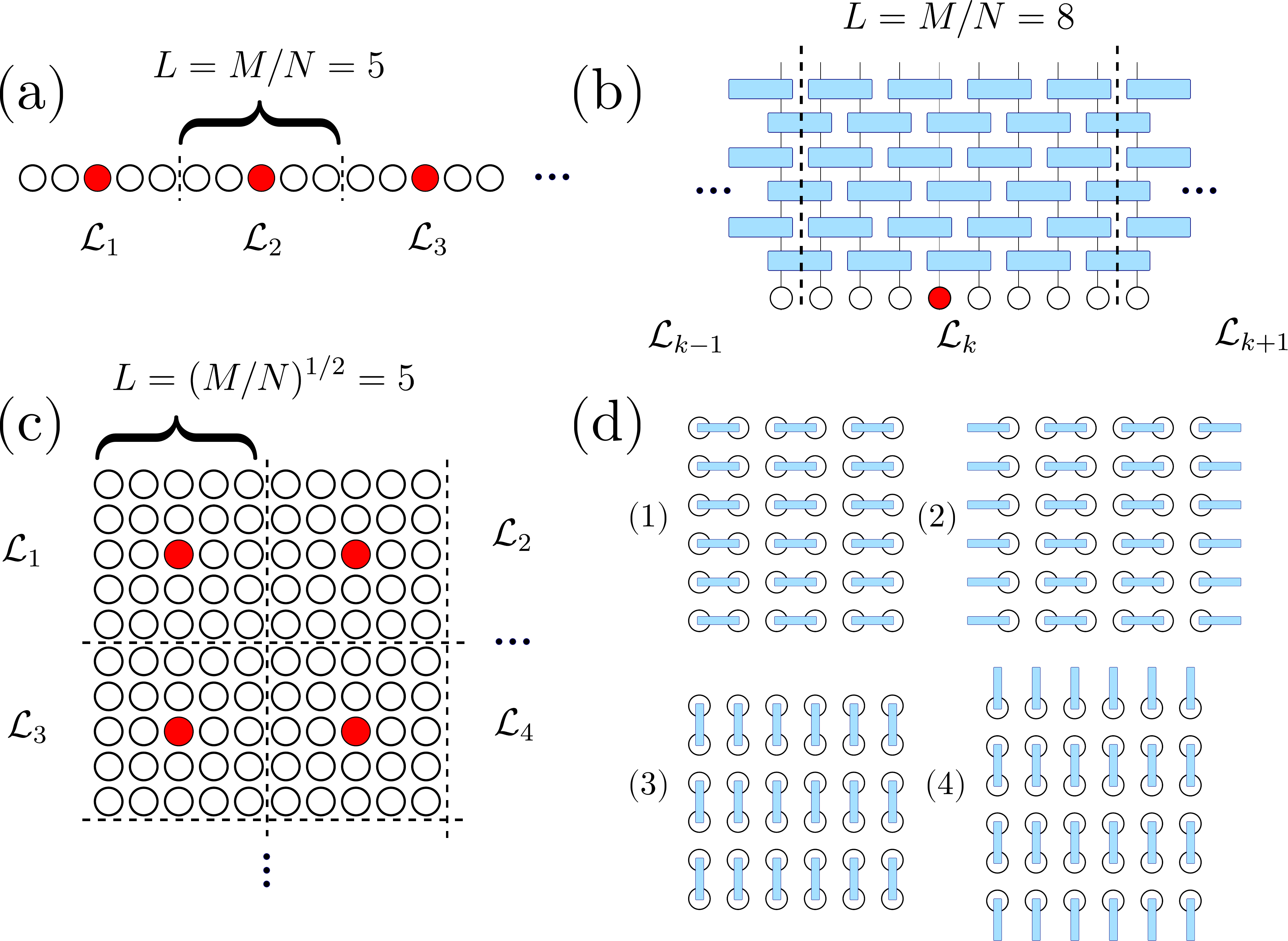}
\caption{Initial state in (a) 1D and (c) 2D architectures. Red dots represent sources.
$\mathcal{L}_\alpha$ represents a sublattice having a single source $s_\alpha$.
Beam-splitter arrays in (b) 1D and (d) 2D architecture. A single round consists of four steps (1)-(4). The structure can be generalized for $d$-dimensional architecture, where a single round consists of $2d$ steps.}
\label{fig:circuit}
\end{figure}

\emph{Approximate sampling for local Haar-random circuits.---}\label{sec:mapping}
Consider $N$ identical sources equally distributed in $M=kN^\gamma$ modes of a $d$-dimensional lattice \cite{equal} and local Haar-random beam-splitter arrays, as illustrated in Fig.~\ref{fig:circuit}.
The lattice consists of $d$-cube sublattices of edge length $L=(M/N)^{1/d}$, containing a single source.
For simplicity, let $L$ be a positive integer.



As recently studied, random beam-splitter arrays can be characterized by a classical random walk \cite{zhang2020entanglement}.
Therefore, photons propagate diffusively on average.
Using this property, we find an upper-bound on the leakage rate from a source at $s_\alpha$ up to $\kappa L$ denoted as $\eta_\alpha(\kappa)\equiv \sum_{j}|U_{j,s_\alpha}|^2$,
where $j$ is the sum over modes away from $\alpha$ more than $\kappa L$:
\begin{lemma}\label{lemma:leakage}
    For depth $D\leq dk^{2/d} \kappa^2N^{2(\gamma-1)/d-\epsilon}/2$ with $\epsilon>0$, the leakage rate $\eta_\alpha$ to distance $\kappa L$ is bounded from above as
    \begin{align}
        \eta_\alpha(\kappa)\leq \exp(-N^{\epsilon})
    \end{align}
    with a probability $1-\delta$ over Haar-random beam-splitter arrays, where $\delta$ is exponentially small in $N$.
\end{lemma}
For later usage for $d=1$, we note that the same inequality holds for $D\leq k^2\kappa^2N^{2(\gamma-1)-\epsilon}(\log N)^2/2$ for leakage rate to distance $\kappa L\log N$.
Motivated by Lemma~\ref{lemma:leakage}, our approximate sampling strategy is to discard the elements of a unitary matrix that are geometrically farther from sources than $\kappa L$, i.e., $U\to\tilde{U}\equiv U-dU$ and implement Theorem~\ref{theore:approx}.
Since $\|dU\|_F^2=\sum_{\alpha\in\mathcal{S}}\eta_\alpha(\kappa)$ is exponentially small, the sampling error is too.
From now on, we focus on typical circuits, emphasizing that the portion of atypical circuits is exponentially small.

Consider a special case ($\kappa=1/2$) where interference between photons from different sources is negligible typically.
In this case, for SPBS, possible outputs can be described by a disconnected graph, in which at most two vertices are connected; thus, the treewidth is 1.
For GBS, assuming that a single source emits constant number of photons at most, 
graphs describing possible outcomes are again disconnected with constant number of vertices and have bounded treewidth.
One may also show that sampling for this regime is easy by noting that the hafnian of a low-rank matrix can be efficiently computed without the assumption \cite{bjorklund2019faster}.
Thus,
\begin{theorem}(Efficient-sampling regime)\label{thm:easy}
     Approximate BS can be efficiently performed for typical circuits of depth $D\leq D_\text{easy}\equiv dk^{2/d}N^{2(\gamma-1)/d-\epsilon}/8=\Theta(N^{2(\gamma-1)/d-\epsilon})$ .
     Especially for $d=1$, the upper bound becomes $D\leq k^2\kappa^2 N^{2(\gamma-1)-\epsilon}(\log N)^2/2$.
\end{theorem}
We note that the distinct upper-bound for 1D arises because the treewidth $O(\log N)$ can be efficiently simulated.

\begin{figure}[t]
\includegraphics[width=200px]{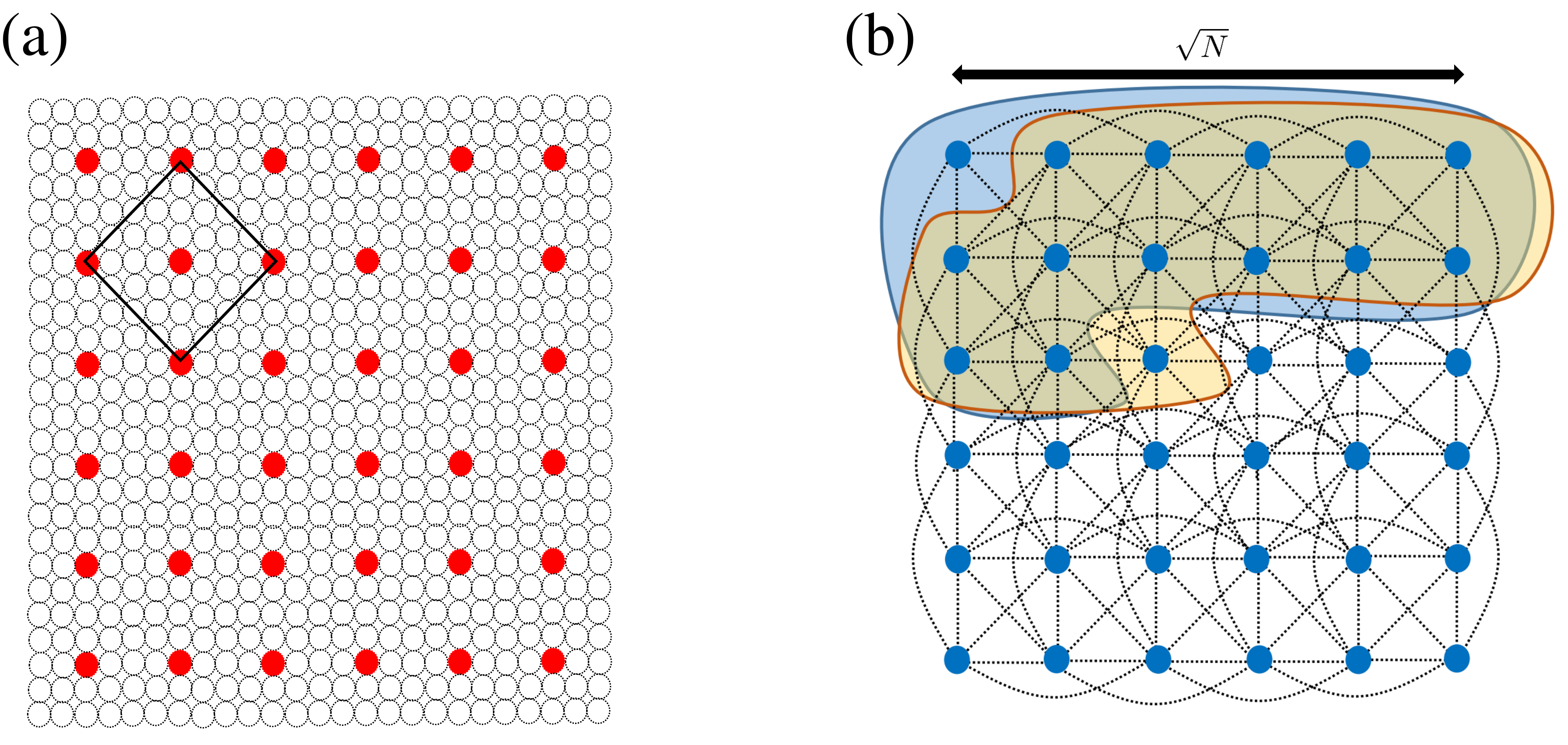}
\caption{GBS on 2D lattice with $N=36$, $M=N^2$, and $\gamma=2$. (a) Red dots represent initial sources. Black solid line describes the region at which a particular input photon can typically propagate for $D=\Theta(L^{2(1-\epsilon)})=\Theta(N^{1-\epsilon})$. (b) Possible tree decomposition of the symmetric graph $B_{\bm m}$ when outputs are at the same position with input sources. An upper bound on the treewidth is $\Theta(\sqrt{N})$ as shown: the first bag (blue) and the second one (yellow).}
\label{fig:tree}
\end{figure}

After $D>D_\text{easy}$ (or $\kappa>1/2$), photons from a sublattice can now propagate to other lattices so that photons from different sources start to interfere (see Fig.~\ref{fig:tree} (a)).
Thus, induced graphs have edges between sources and photons from different sublattices (SPBS) or photons from different sources (GBS) as shown in Fig.~\ref{fig:tree} (b).
In this case for 2D architecture, there exists an outcome corresponding to a graph containing a grid whose treewidth is unbounded, i.e., $w=\sqrt{N}$.
Therefore, the sampling complexity starts to scale exponentially in $\sqrt{N}$~\cite{supple}, which reveals a sharp transition of the complexity at $D=D_\text{easy}$ from polynomial to sub-exponential.
Similarly, when photons propagate further and for arbitrary dimension, i.e., $D=\Theta(N^{2\alpha/d} D_\text{easy}$) with $0 \leq \alpha \leq 1$ (equivalently $\kappa=\Theta(N^{\alpha/d}$)), we can find a tree decomposition whose width is $\Theta(N^{\frac{\alpha}{d}+\frac{d-1}{d}})$ for any outcomes.
Therefore, we have the following theorem:
\begin{theorem}((sub-)exponential regime)
    One can sample from typical linear-optical circuits of $D=\Theta(N^{2\alpha/d} D_\text{easy})$ with $ 0 \leq \alpha \leq 1$ by complexity $O(\poly(N)2^{\Theta(N^{\frac{\alpha}{d}+\frac{d-1}{d}})})$.
\end{theorem}
Especially when $\alpha=1$, any photons can propagate to all modes, i.e., photons fully interfere each other, which forms the complete graph for all outcomes, so that treewidth becomes $\Theta(N)$.
Since generic global Haar-random circuits are fully connected, at least $\Theta(N^{2\gamma/d})=\Theta(M^{2/d})$ order of depth is required to implement a global Haar-random circuit using a local Haar-random circuit and such an input configuration.
Fig.~\ref{fig:complexity} summarizes the result.

Interestingly, the recent GBS experiments' circuit depth scales as $\sqrt{M}$ \cite{zhong2020quantum, zhong2021phase}, which implies that their circuit is not sufficient to form a global Haar-random circuit.
Nevertheless, aside from the deviation from global Haar-random matrices, locality in their circuit is not apparent because the scale is intermediate while our analysis focuses on an asymptotic regime.
Therefore, our approximate algorithm might result in a large simulation error for this intermediate-scale GBS because of a large constant factor of the error.

One may also consider other initial configurations under local Haar-random circuits, for example sources are concentrated on a certain region.
We show that for those cases, one already needs a depth $D=\Theta(ND_\text{easy})$ to reach collision-free regime, and thus collision occurs with a high probability, while equally spaced sources reach the collision-free regime when $D=\Theta(D_\text{easy})$ \cite{supple}.



\begin{figure}[t]
\includegraphics[width=220px]{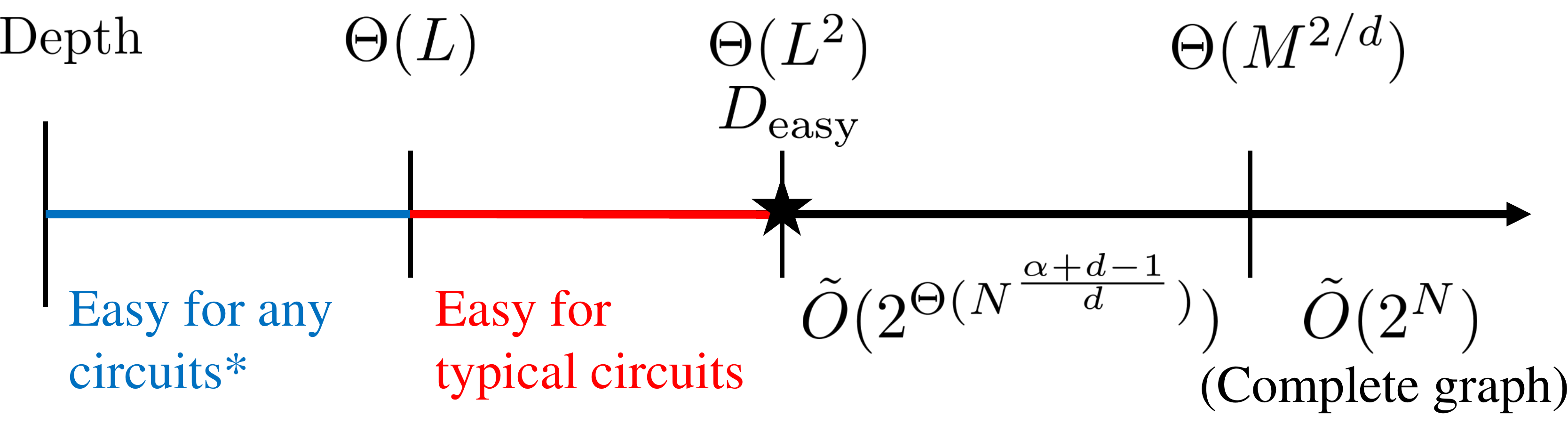}
\caption{The complexity diagram for local Haar-random BS. As the star-marked, a sharp transition occurs for the complexity of our algorithm. 
Easiness for any circuits (*) is proved in Ref.~\cite{deshpande2018dynamical}. Note that for 1D, the depth that is easy for typical circuits is larger (see Theorem 3).}
\label{fig:complexity}
\end{figure}

\emph{GBS validation test.---} \label{sec:USTC}
Finally, we implement the likelihood test to experimental samples \cite{zhong2021phase} against samples generated by our treewidth-based approximate algorithm:
\begin{align}
    \text{ratio}\equiv\log\frac{\text{Pr}_\text{ideal}(\text{Samples from experiment})}{\text{Pr}_\text{ideal}(\text{Samples from treewidth algorithm})},
\end{align}
which is equivalent to the test implemented in Refs.~\cite{zhong2020quantum, zhong2021phase}.
Thus, we compare the likelihood of each sample set with respect to the (lossy) ideal probability distribution.


\begin{figure}[t]
\includegraphics[width=220px]{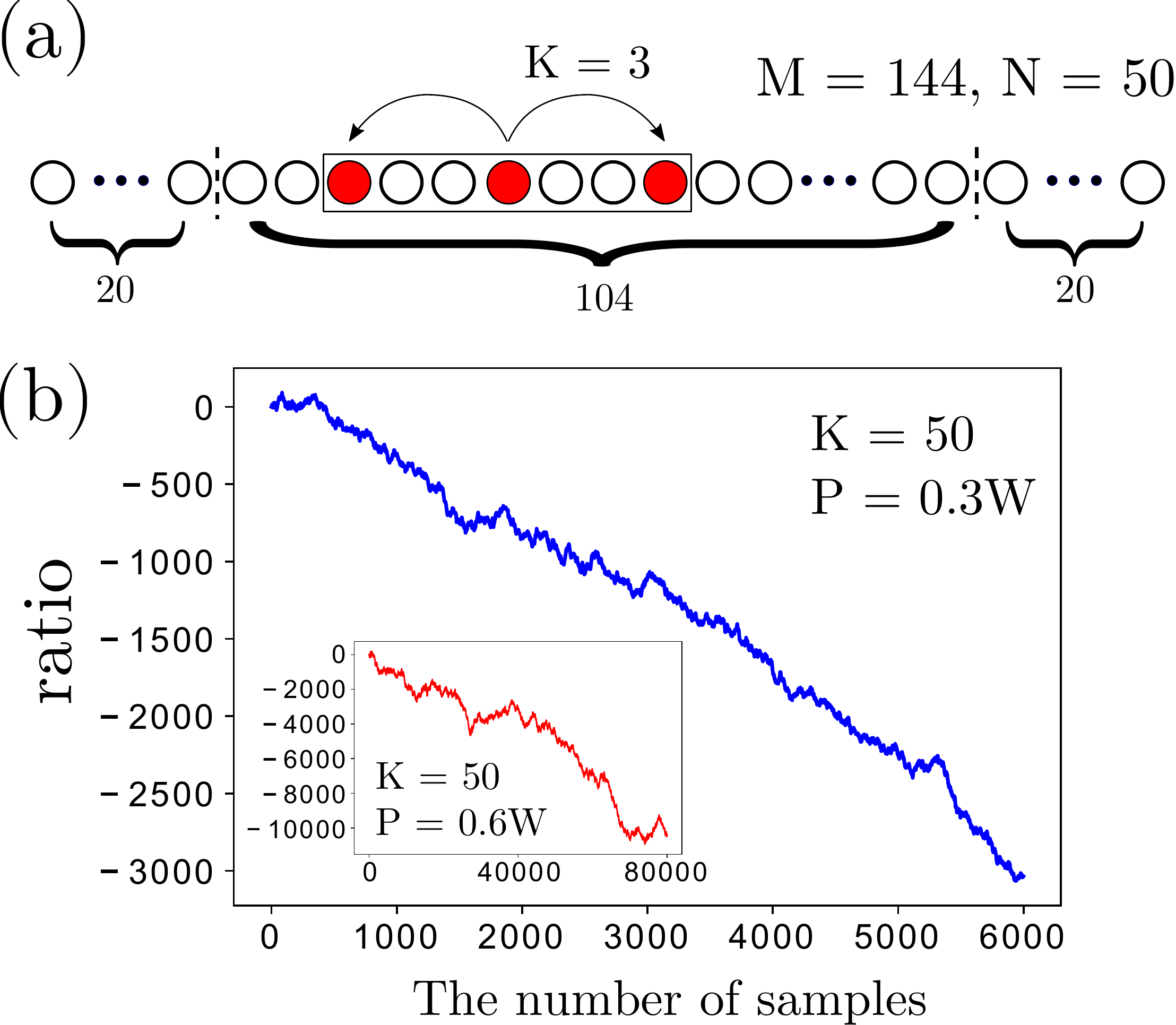}
\caption{Likelihood test for the recent GBS experiment~\cite{zhong2021phase}. (a) Rearranged mode-configuration with squeezed states sources (red dots). For approximated sampling, we discard elements of a circuit matrix $U$ that is farther than $K$ for the sources.
(b) Log-likelihood ratio of experimental samples against those from the treewidth algorithm.
}
\label{fig:HOG}
\end{figure}

For the treewidth algorithm, we have approximated local Haar-random with limited propagation (see Fig.~\ref{fig:HOG} (a)) and sampled from the approximated circuit using Theorem~\ref{theore:approx}.
Specifically, we have rearranged the 144 modes one-dimensionally and set a propagation length $K$ for approximation. Note that setting a propagation length $K$ implies that the corresponding GBS's treewidth is $w=2K+1$ and that a complete graph has a treewidth $w=M$.
To compensate the lost photons from the approximation, we have increased the squeezing parameters and thermal photons to have the same average total photon numbers.

In Fig.~\ref{fig:HOG} (b), we present the likelihood ratio as the number of samples increases for two classically verifiable instances of the experiments in Ref.~\cite{zhong2021phase}.
It clearly shows that the treewidth-based approximate algorithm renders larger likelihood than the experiment.
We also provide evidence in Ref.~\cite{supple} for GBS experiments in the quantum supremacy regime by investigating the likelihood ratio for marginals that the treewidth-based algorithm might give a larger likelihood with a limited treewidth.
Therefore, the numerical results imply that a fully connected circuit is crucial for more rigorous quantum-advantage demonstration.

\emph{Discussion.---} \label{sec:previous}
We have presented classical samplers taking advantage of limited connectivity of a circuit.
It is an interesting open question to find more efficient sampling algorithms than the one based on the treewidth. 
Another open problem is to close the gap of complexity for computing loop hafnian between the treewidth-based algorithm ($2^N$) and the best-known algorithm ($2^{N/2}$)~\cite{quesada2022quadratic}. 


Finally, Theorem~\ref{thm:easy} shows that typical linear-optical circuits up to depth $D\leq D_\text{easy}=\Theta(N^{\frac{2}{d}(\gamma-1)-\epsilon})$ allow an efficient classical simulation except for an exponentially small fraction of random circuits. 
Meanwhile, there exists a circuit hard to classically simulate for $D=\Omega(N^{\frac{\gamma-1}{d}+\epsilon})$ under reasonable complexity-theoretic conjectures \cite{aaronson2011computational, deshpande2018dynamical, maskara2019complexity}.
Theorem~\ref{thm:easy} can be compatible with the hardness results since together the implication is that the worst-case instances occupy only at most an exponentially small faction of the space of all linear optical circuits.

\begin{acknowledgements}
We thank Owen Howell, Alireza Seif, Roozbeh Bassirian, Abhinav Deshpande for interesting and fruitful discussions.
C.O. and L.J. acknowledge support from the ARL-CDQI (W911NF-15-2-0067), ARO (W911NF-18-1-0020, W911NF-18-1-0212), ARO MURI (W911NF-16-1-0349), AFOSR MURI (FA9550-15-1-0015, FA9550-19-1-0399), DOE (DE-SC0019406), NSF (EFMA-1640959, OMA-1936118), and the Packard Foundation (2013-39273). Y. L. acknowledges National Research Foundation of Korea a grant funded by the Ministry of Science and ICT (NRF-2020M3E4A1077861) and KIAS Individual Grant
(CG073301) at Korea Institute for Advanced Study.
B.F. acknowledges support from AFOSR (YIP number FA9550-18-1-0148 and
FA9550-21-1-0008). 
This material is based upon work partially
supported by the National Science Foundation under Grant CCF-2044923
(CAREER).
We also acknowledge the University of Chicago’s Research Computing Center for their support of this work.
We acknowledge The Walrus python library for the open source of Gaussian boson sampling algorithms \cite{gupt2019walrus}
\end{acknowledgements}

\bibliography{reference}

\end{document}


\preprint{APS/123-QED}
\title{Classical simulation of boson sampling based on graph structure : Supplemental Material}
\author{Changhun Oh}%
\affiliation{Pritzker School of Molecular Engineering, University of Chicago, Chicago, Illinois 60637, USA}
\author{Youngrong Lim}%
\affiliation{School of Computational Sciences, Korea Institute for Advanced Study, Seoul 02455, Korea}
\author{Bill Fefferman}
\affiliation{Department of Computer Science, University of Chicago, Chicago, Illinois 60637, USA}
\author{Liang Jiang}
\affiliation{Pritzker School of Molecular Engineering, University of Chicago, Chicago, Illinois 60637, USA}
\date{\today}

\maketitle

\setcounter{equation}{0}
\setcounter{lemma}{2}
\renewcommand{\thesection}{S\arabic{section}}
\renewcommand{\theequation}{S\arabic{equation}}
\renewcommand{\thefigure}{S\arabic{figure}}

\tableofcontents

\newpage
\section{Dynamical Programming Algorithm (Proof of Lemma 1)}
In this section, we prove Lemma 1 by explicitly providing algorithms to compute permanent or loop hafnian of an $N\times N$ matrix with the complexity $O(Nw^22^w).$

Before providing our algorithms, we note that methods using tree decomposition and treewidth have broad applications to many areas in computer science and applied mathematics~\cite{dechter2003constraint,bodlaender2008combinatorial}. There are many computational problems and corresponding algorithms whose complexity depends on the treewidth instead of the system size.
Remarkably, algorithms based on the treewidth are believed to be almost optimal for specific problems in that only a tiny improvement of the tree algorithm makes fail the Exponential Time Hypothesis~\cite{marx2007can}.

We also note that in general, the exact computation of the treewidth for an arbitrary graph is NP-hard problem~\cite{arnborg1987complexity}. 
Nevertheless, we still have good approximation algorithms of treewidth~\cite{bodlaender2008combinatorial}, and for fixed $k$, a linear-time algorithm whether the treewidth at most $k$ and finding the corresponding tree decomposition~\cite{bodlaender1996linear}. 
In addition, a reasonable upper bound on the treewidth is enough in practical situations.

\subsection{Dynamical Programming Algorithm (permanent)}\label{permanent}

\subsubsection{Tree decomposition of a bipartite graph}
We generalize the algorithm computing permanent by using tree decomposition proposed in Ref.~\cite{cifuentes2016efficient} to apply collision cases in single-photon boson sampling.
Let us first recall the algorithm of Ref.~\cite{cifuentes2016efficient}.
Consider an $M\times M$ matrix $U$.
Let $A$ and $X$ denote the subset of its rows and columns, respectively, i.e., $A=\{\alpha_{1},\dots,\alpha_k\}$ and $X=\{r_1,\dots,r_k\}$, where $\alpha_i$'s and $r_i$'s are in $[M]\equiv \{1,\dots,M\}$.
For collision-free cases, redundant elements do not appear in $A$ and $X$, while we can apply the algorithm even when there is redundancy by extending matrix $U$ by repeating the corresponding rows and columns and treating them as distinct row and column elements.
However, since the algorithm proposed in Sec.~\ref{sec:per_collision} is more efficient than the latter for collision cases and for simplicity, we assume that they are distinct.
Now, we define a bipartite graph $G(U,A,X)$ for the matrix $U$ as a graph having vertices $A\cup X$ and edges $(a,x)$ if $U_{a,x}\neq 0$.

We now introduce a tree decomposition $(T,\chi)$ of a bipartite graph $G$.
Tree decomposition of a graph $G$ needs to satisfy the following conditions:

i-1) The union of $\{\alpha(t)\}_{t\in T}$ is the whole row set $A$.

i-2) The union of $\{\chi(t)\}_{t\in T}$ is the whole column set $X$.

ii) For every edge $(a,x)$ of $G$, there exists a node $t$ of $T$ with $a\in\alpha(t)$, $x\in \chi(t)$.

iii-1) For every $a\in A$, the set $\{t:a\in\alpha(t)\}$ forms a subtree of $T$.

iii-2) For every $x\in X$, the set $\{t:x\in\chi(t)\}$ forms a subtree of $T$.

The width $w$ of a tree decomposition is the largest of $|\alpha(t)|+|\chi(t)|-1$ among all nodes $t$, and the treewidth is the smallest width $w$ over all possible tree decompositions for a given graph.
Here, the subtree rooted in $t$ is the tree that consists of $t$ and its descendants.

We now define a partial permanent for some sets $D\subset A$ and $Y\subset X$ as
\begin{align}
    \per(D,Y)\equiv \sum_{\pi\in \BPM(G)} \prod_{a\in D}U_{a,\pi(a)},
\end{align}
where the sum is over all bipartite perfect matchings $\pi : D\to Y$.
If $|D|\neq |Y|$, then $\per(D,Y)=0$ because they cannot constitute a perfect matching.
Here, the goal is to compute $\Per(U^{\alpha_1,\dots,\alpha_k}_{r_1,\dots,r_k})=\per(A,X)$ (see Sec. \ref{SM:clifford}) using the tree decomposition and analyze the complexity.

\subsubsection{Collision-free cases}
Now we present the recursion formula to compute $\Per(U^{\alpha_1,\dots,\alpha_k}_{r_1,\dots,r_k})$ \cite{cifuentes2016efficient}:
\begin{lemma}\label{lemma:permanent}
    Let $U$ be a matrix with associated graph $G$.
    Let $(T,\alpha,\chi)$ be a tree decomposition of $G$.
    Let $t$ be an internal node of $T$, and let $D$ and $Y$ be such that
    \begin{align}
        \alpha(T_{t})\setminus \alpha(t)\subset D \subset \alpha(T_{t}), ~~~\chi(T_{t})\setminus \chi(t)\subset Y \subset \chi(T_{t}), ~~~ |D|=|Y|.
    \end{align}
    
    Then,
    \begin{align}\label{eq:perm_res}
        \per(D,Y)
        =\sum_{D_t,D_{c_j},Y_t,Y_{c_j}}\per(D_{t},Y_{t})\prod_{j=1}^l\per^*(D_{c_{j}},Y_{c_{j}}),
    \end{align}
    where the sum is over all $\mathcal{D}=(D_t,D_{c_1},\dots), \mathcal{Y}=(Y_t,Y_{c_1},\dots)$ such that:
    \begin{align}\label{eq:perm_cond}
    &D=D_{t}\sqcup (D_{c_{1}}\sqcup \dots \sqcup D_{c_l}), ~~~
    Y=Y_{t}\sqcup (Y_{c_{1}}\sqcup \dots \sqcup Y_{c_l}), \\ 
    &\alpha(T_{c_{j}})\setminus \alpha(t)\subset D_{c_{j}}\subset \alpha(T_{c_{j}}),~~~
    D_{t}\subset \alpha(t), \\ 
    &\chi(T_{c_{j}})\setminus \chi(t)\subset Y_{c_{j}}\subset \chi(T_{c_{j}}),~~~
    Y_{t}\subset \chi(t).
\end{align}
\end{lemma}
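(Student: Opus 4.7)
The plan is to prove \eqref{eq:perm_res} by a weighted bijection between the BPMs $\pi:D\to Y$ summed on the LHS and the tuples $(\mathcal{D},\mathcal{Y},\pi_t,\pi_{c_1},\dots,\pi_{c_l})$ that arise when one expands each product $\per(D_t,Y_t)\prod_j\per^*(D_{c_j},Y_{c_j})$ on the RHS as a sum over local BPMs $\pi_t:D_t\to Y_t$ and $\pi_{c_j}:D_{c_j}\to Y_{c_j}$. The central point is that combining local matchings yields a global BPM of $D$ to $Y$, and conversely, with the weight $\prod_{a\in D}U_{a,\pi(a)}$ factoring across pieces.

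The first step is to extract the \emph{separator} property of the tree decomposition. For any $a\in\alpha(T_{c_j})\setminus\alpha(t)$, axiom (iii-1) implies that the nodes of $T$ containing $a$ lie entirely within $T_{c_j}$; together with axiom (ii), this forces any nonzero entry $U_{a,x}$ to satisfy $x\in\chi(T_{c_j})$, and symmetrically for columns. Consequently no edge of $G$ links a row in $\alpha(T_{c_j})\setminus\alpha(t)$ to a column in $\chi(T_{c_{j'}})\setminus\chi(t)$ for $j\neq j'$. Hence every matching edge is either entirely local to a single subtree $T_{c_j}$ or passes through the interface $\alpha(t)\cup\chi(t)$.

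With this in hand, I would define the forward map $\pi\mapsto(\mathcal{D},\mathcal{Y})$ together with induced local matchings. A row $a\in D\setminus\alpha(t)$ lies in a unique $\alpha(T_{c_j})$ and is forced into $D_{c_j}$, with $\pi(a)\in Y_{c_j}$ by the separator. A row $a\in D\cap\alpha(t)$ whose image $\pi(a)$ lies in $\chi(T_{c_j})\setminus\chi(t)$ must likewise be placed in $D_{c_j}$ (with $a\in\alpha(T_{c_j})$ guaranteed by the separator argument). All remaining matching edges have both endpoints in $\alpha(t)\cup\chi(t)$ and are assigned to the $t$-piece. A direct check confirms that the resulting $(\mathcal{D},\mathcal{Y})$ satisfies \eqref{eq:perm_cond}, that the weight factors as $\prod_{a\in D}U_{a,\pi(a)}=\prod_{a\in D_t}U_{a,\pi_t(a)}\prod_j\prod_{a\in D_{c_j}}U_{a,\pi_{c_j}(a)}$, and that the inverse map — gluing compatible local BPMs into a single $\pi$ — is well-defined since the separator property rules out forbidden cross-subtree edges.

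The step I expect to require the most care is ruling out double counting when a row $a\in\alpha(t)\cap\alpha(T_{c_j})$ is matched to a column $\pi(a)\in\chi(t)\cap\chi(T_{c_j})$: under \eqref{eq:perm_cond} alone, either the $t$-piece or the $c_j$-piece could a priori accommodate this edge. I would resolve the ambiguity by fixing a canonical convention (e.g., keeping every shared matching edge with both endpoints in $\alpha(t)\cup\chi(t)$ within the $t$-piece) and verifying that the recursive specification of $\per^*$ at the child $c_j$ accumulates exactly the local matchings compatible with this convention. An induction on the height of $T_t$ would then close the argument.
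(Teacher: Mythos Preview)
Your proposal is correct and follows essentially the same route as the paper: the paper likewise decomposes each BPM $\pi$ via the separator property into $\pi_t\sqcup\pi_{c_1}\sqcup\cdots\sqcup\pi_{c_l}$, and the ``canonical convention'' you propose for the ambiguous edges is precisely what the constraint $\pi_{c_j}\cap(\alpha(t)\times\chi(t))=\emptyset$ in the definition of $\per^*$ already encodes. The induction on the height of $T_t$ is not needed for this lemma itself---it is a single decomposition step, with the recursion being the subsequent algorithm that applies the lemma from the leaves upward.
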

Here, $l$ is the number of children of the node $t$, and we define
\begin{align}
    \per^*(D,Y)\equiv \sum_\pi \prod_{a\in D} U_{a,\pi(a)},
\end{align}
where the sum is over all bijections $\pi:D\to Y$ such that $\pi\cap (\alpha(t)\times \chi(t))=\emptyset$.

We provide a proof sketch and more detailed proof can be found in Ref. \cite{cifuentes2016efficient}.
Let us consider a matching $\pi:D\to Y$ and decompose elements as
\begin{align}
    D=(D\cap \alpha(t))\sqcup (D\cap \alpha(T_{c_1})\setminus \alpha(t))\sqcup\dots\sqcup(D\cap \alpha(T_{c_l})\setminus \alpha(t)), \\ 
    Y=(Y\cap \chi(t))\sqcup (Y\cap \chi(T_{c_1})\setminus \chi(t))\sqcup\dots\sqcup(Y\cap \chi(T_{c_l})\setminus \chi(t)).
\end{align}
Using definition of tree decomposition, one can find that the matching can be decomposed as
\begin{align}\label{eq:perm_decom}
    \pi=\pi_t\sqcup \pi_{c_1}\sqcup \cdots \sqcup \pi_{c_l}
\end{align}
in such a way that $\pi_t\subset(D\cap \alpha(t))\times (Y\cap \chi(t))$ and for each child $c$ we have that $\pi_c\subset(D\cap \alpha(T_c))\times(Y\cap\chi(T_c))$, where $\pi_{c_j}\cap (\alpha(t)\times \chi(t))=\emptyset$.
Also, the decomposition is unique.
By defining $Y_t$ and $Y_{c_i}$ to be the range of $\pi_t$ and $\pi_{c_i}$, respectively, and $D_t$ and $D_{c_i}$ to be the domains of $\pi_t$ and $\pi_{c_i}$, they satisfy Eq. \eqref{eq:perm_cond}.
Therefore, matchings between $D$ and $Y$ can be decomposed as Eq.~\eqref{eq:perm_decom}, which leads to Eq.~\eqref{eq:perm_res}.

Here, by using inclusion-exclusion principle, one can find that \cite{cifuentes2016efficient}
\begin{align}\label{eq:per_star}
    \per^*(D_c,Y_c)=\sum_{\substack{D_{tc}\sqcup D_{cc}=D_c,\\ Y_{tc}\sqcup Y_{cc}=Y_c,\\ D_{tc}\subset\alpha(t),Y_{tc}\subset\chi(t)}}(-1)^{|D_{tc}|}\per(D_{tc},Y_{tc})\per(D_{cc},Y_{cc}),
\end{align}
where $|D_{tc}|$ is the number of elements in $D_{tc}$.
More explicitly, observe that for $D_{tc}\subset D_t\cap \alpha(t)$ and $Y_{tc}\subset Y_t\cap\chi(t)$,
\begin{align}
    \per(D_{tc},Y_{tc})\per(D_c\setminus D_{tc},Y_c\setminus Y_{tc})
    =\sum_{\pi_c^*(D_{tc},Y_{tc})}\prod_{a\in D_{c}} U_{a,\pi^*_c(a)},
\end{align}
where the sum is over perfect matchings between $D_c$ and $Y_c$, $\pi_c^*$ that contains a perfect matching between $D_{tc}$ and $Y_{tc}$.
Recall that $\per^*(D_c,Y_c)$ is the sum over matchings that do not contain any matching between $\alpha(t)$ and $\chi(t)$, i.e., $\per^*(D_c,Y_c)=\sum_{\pi_c^*(\emptyset,\emptyset)}\prod_{a\in D_{tc}} U_{a,\pi^*_c(a)}$
Thus, inclusion-exclusion principle leads to Eq.~\eqref{eq:per_star}.
Therefore, one can equivalently rewrite Eq.~\eqref{eq:perm_res} as
\begin{align}
    \per(D,Y)=\sum_{\mathcal{D},\mathcal{Y}}\per(D_t,Y_t)\prod_{j=1}^l(-1)^{|D_{tc_j}|}\per(D_{tc_j},Y_{tc_j})\per(D_{cc_j},Y_{cc_j}),
\end{align}
where the sum is over $\mathcal{D}=(D_t,D_{tc_1},D_{cc_1},\dots)$ and $\mathcal{Y}=(Y_t,Y_{tc_1},Y_{cc_1},\dots)$ satisfying
\begin{align}
    &D=D_{t}\sqcup (D_{tc_{1}}\sqcup D_{cc_{1}}\sqcup \dots \sqcup D_{tc_l}\sqcup D_{cc_l}), ~~~
    Y=Y_{t}\sqcup (Y_{tc_{1}}\sqcup Y_{cc_{1}}\sqcup\dots \sqcup Y_{tc_l}\sqcup Y_{cc_l}), \\ 
    &\alpha(T_{c_{j}})\setminus \alpha(t)\subset D_{cc_{j}}\subset \alpha(T_{c_{j}}),~~~
    D_{t}\subset \alpha(t), ~~~ D_{tc_j}\subset \alpha(t)\cap \alpha(t_{c_j}),\\ 
    &\chi(T_{c_{j}})\setminus \chi(t)\subset Y_{c_{j}}\subset \chi(T_{c_{j}}),~~~ 
    Y_{t}\subset \chi(t),~~~
    Y_{tc_j}\subset \chi(t)\cap \chi(t_{c_j}).
\end{align}
It proves Lemma~\ref{lemma:permanent}.
Using the recursive relation, we finally obtain the quantity of $\per(A,X)$.

Now, let us analyze the complexity.
We first need to compute $\per(D_t,Y_t)$ for all node $t$'s.
For a node $t$, since $D_t\subset \alpha(t)$ and $Y_t\subset \chi(t)$, one can evaluate all $\per(A, X)$ with $A\subset D$ and $X\subset Y$ in $O(w^22^w)$ \cite{cifuentes2016efficient}:
Consider an $n_1\times n_2$ matrix $U$.
Let $A_0$ and $X_0$ be its row and column set, respectively, and $S=\{(D,Y)\subset A_0\times X_0:|D|=|Y|\}$.
Then, one can compute $\per(D,Y)$ for all $(D,Y)\in S$ recursively as follows.
Consider $(D,Y)$ with $|D|=|Y|=i$.
Then, letting $a_0$ be the first element in $D$, we can compute its permanent by using $(D',Y')$ with $|D'|=|Y'|=i-1$,
\begin{align}
    \per(D,Y)=\sum_{x\in Y}U_{a_0,x}\per(D\setminus a_0,Y\setminus x).
\end{align}
Since the loop is over at most $n_2$ elements and finding the sets $D\setminus a_0$ and $Y\setminus x$ costs at most $O(n_1+n_2)$,
Thus, its complexity is $O((n_1+n_2)^22^{n_1+n_2})$.

Once we know all the values of $\per(D_t,Y_t)$, $\per(D_{tc_j},Y_{tc_j})$, and $\per(D_{cc_j},Y_{cc_j})$, we can then evaluate $\per(D,Y)$ for all $D$ and $Y$ in $O(lw^22^w)$ by using a subset convolution over the subsets of $\alpha(t)\times \chi(t)$ \cite{bjorklund2007fourier}.
Therefore, the complexity of computing $\Per(U^{\alpha_1,\dots,\alpha_k}_{r_1,\dots,r_k})$ is $O(k w^22^w)$ since the sum of the number of children of all nodes is $O(k)$.

\subsubsection{Collision events}\label{sec:per_collision}
In general, more than a single photon can be detected in the same mode, i.e., $r_i$'s are not necessarily distinct.
In this case, although we may still employ the algorithm in the previous section, we can further develop the algorithm to be more efficient.
We first set a threshold of the total number of photons in a mode $c$, i.e., the maximum number of photons in an output mode is $c$, and we declare the overload if we have more than $c$ photons.
We note that when $M=\omega(N^2)$, the probability of collision is highly suppressed, i.e., $c=1$ is sufficient \cite{aaronson2011computational}.
To take into account collisions properly, we modify the definition of graphs by additionally assigning photon numbers at each vertices.

Now the goal is to compute $\Per(U^{\alpha_1,\dots,\alpha_k}_{r_1,\dots,r_k})$, where $\alpha_i$'s and $r_i$'s may not be distinct.
While we focus on single-photon boson sampling, i.e. $\alpha_i$'s are distinct, for generality, we assume that $\alpha_i$'s may not be distinct as well.
Again, let $A=\{\alpha_1,\dots,\alpha_k\}$ and $X=\{r_1,\dots,r_k\}$ with treating $\alpha_i$'s and $r_i$'s are distinct, and we consider bipartite matchings between $A$ and $X$.
We introduce weight vectors $\bm{n}$ and $\bm{m}$ such that $n_i$ is the number of repetition of $i$ in $A$ and $m_i$ is the number of repetition of $i$ in $X$ for $i\in[M]$ for later usage.
For boson sampling, $\bm{n}$ and $\bm{m}$ may correspond to an input photon and output photon configuration (of marginals), respectively (see Sec.~\ref{SM:clifford}).
Therefore, $n_i\in\{0,1\}$ for single-photon boson sampling.
Let $G_{\bm{n},\bm{m}}=G(U,A,X)$ be the underlying graph structure of $U$, with row and column vertex sets $A$ and $X$.

Meanwhile, for $A$ and $X$, we define $\bar{A}$ and $\bar{X}$ to be the sets without redundancy of $A$ and $X$, respectively.
We now define some generalized bipartite perfect matchings of $G$ that allow multiple edges between vertices in $\bar{A}$ and $\bar{X}$.
We represent a list of repeated edges as a vector $\tau$, i.e., $\tau$ is a vector indexed by $E=\bar{A}\times \bar{X}$ and for each $ij\in E$ the entry $\tau_{ij}$ indicates the number of times that edge $ij$ appears.
The degree vector of $a\in \bar{A}$ is the vector $\deg_{\bar{A}}(a)$ with coordinates $\deg_{\bar{A}}(a)=\sum_{x\in \bar{X}}\tau_{a x}$, and the degree vector of $x\in \bar{X}$ is the vector $\deg_{\bar{X}}(x)$ with coordinates $\deg_{\bar{X}}(x)=\sum_{a\in \bar{A}}\tau_{a x}$.
For weight vectors $\bm{n}$ and $\bm{m}$, we define an $(\bm{n},\bm{m})$-matching of $G_{\bm{n},\bm{m}}$ as a vector $\tau$ such that $\deg_{\bar{A}}(a_i)=n_i$ and $\deg_{\bar{X}}(x_i)=m_i$ for all $a_i\in \bar{A}$ and $x_i\in \bar{X}$.
Let $\text{BPM}_G(\bm{n},\bm{m})$ be the set of all $(\bm{n},\bm{m})$-matchings of $G$.

Let $G_{\bm{n},\bm{m}}$ be the graph associated to $U$.
We may view its row vertices as $(i,l_i)$, where $i\in \bar{A}$, $l_i\in [n_i]$, and column vertices as pairs $(j,k_j)$, where $j\in \bar{X}$, and $k_j\in[m_j]$.
The edges are between the vertices.
Given a bipartite perfect matching $\pi\in\text{BPM}(G_{\bm{n},\bm{m}})$, there is a natural way to obtain an $(\bm{n},\bm{m})$-matching $\tau\in\text{BPM}_G(\bm{n},\bm{m})$.
Namely, for each edge $[(i,l_i), (j,k_j)]$ in $\pi$ we drop the second coordinate and obtain the edge $ij$. 
This gives a function $f:\text{BPM}(G_{\bm{n},\bm{m}})\to \text{BPM}_G(\bm{n},\bm{m})$.
Thus, one can easily show that the fiber $f^{-1}$ consists of $\bm{m}!\bm{n}!/\bm{\tau}!$ elements.
Then,
\begin{align}
    \Per(U^{\alpha_1,\dots,\alpha_k}_{r_1,\dots,r_k})
    \equiv \sum_{\pi\in\text{BPM}(G_{\bm{n},\bm{m}})}\prod_{a\in A}U_{a,\pi(a)} =\sum_{\tau\in\text{BPM}_G(\bm{n},\bm{m})}\frac{\bm{m}!\bm{n}!}{\bm{\tau}!}\prod_{a\in \bar{A}}\prod_{\tau(a)} U_{a,\tau(a)}.
\end{align}
Note that the product over $\tau(a)$ appears because $a\in \bar{A}$ can be connected multiple vertices on $X$.

From now on, we assume that the weight vectors $\bm{n}$ and $\bm{m}$ are fixed.
Given another weight vectors $\bm{d}\leq \bm{n}$ and $\bm{y}\leq \bm{m}$ (component-wise), the associated scaled partial permanent is defined as
\begin{align}
    \per(\bm{d}, \bm{y})\equiv \frac{1}{\bm{d}!\bm{y}!}\Per(U^{\bm{d}}_{\bm{y}})=\sum_{\tau\in\text{BPM}_G(\bm{d},\bm{y})}\frac{1}{\bm{\tau}!}\prod_{a\in \supp(\bm{d})}\prod_{\tau(a)} U_{a,\tau(a)},
\end{align}
where 
$\supp(\bm{d})\equiv\{i\in[M]:d_i\neq 0\}$.
Here, $U^{\bm{d}}_{\bm{y}}\equiv U^{\alpha_1,\dots,\alpha_k}_{r_1,\dots,r_k}$ such that the weight vectors of $\alpha_i$'s and $r_i's$ are equal to $\bm{d}$ and $\bm{y}$, respectively.
Then, the permanent that we want to compute is written as
\begin{align}
    \Per(U^{\bm{n}}_{\bm{m}})=\bm{n}!\bm{m}!\per(\bm{n},\bm{m}).
\end{align}

Let us also define $\sat(\bm{d})=\{i\in[M]:d_i=n_i\}$ for a given $\bm{n}$, and $\sat(\bm{y})=\{i\in[M]:y_i=m_i\}$ and $\supp(\bm{y})=\{i\in[M]:y_i\neq 0\}$ for a given $\bm{m}$.
With a tree decomposition $(T,\alpha,\chi)$ for $G_{\bm{n},\bm{m}}$,
we now provide the recursive relation to compute $\Per(U^{\bm{n}}_{\bm{m}})$.

\begin{lemma}
    Let $U$ be a matrix with associated graph $G$.
    Let $(T,\alpha,\chi)$ be a tree decomposition of $G$.
    Let $t$ be an internal node of $T$, and let $\bm{d}$ and $\bm{y}$ be such that
    \begin{align}\label{eq:condition_per}
        \alpha(T_t)\setminus \alpha(t)\subset \sat(\bm{d})\subset \supp(\bm{d})\subset \alpha(T_t), ~~~ 
        \chi(T_t)\setminus \chi(t)\subset \sat(\bm{y})\subset \supp(\bm{y})\subset \chi(T_t),~~~ |\bm{d}|=|\bm{y}|,
    \end{align}
    where $|\bm{d}|=\sum_{i=1}^M d_i$, and $|\bm{y}|$ is similar.
    Then,
    \begin{align}\label{eq:perm_collision_res}
        \per(\bm{d},\bm{y})=\sum_{\bm{d}_t,\bm{d}_c,\bm{y}_t,\bm{y}_c}\per(\bm{d}_t,\bm{y}_t)\prod_{j=1}^l\per^*(\bm{d}_{c_{j}},\bm{y}_{c_j}),
    \end{align}
    where the sum is over $(\bm{d}_t,\bm{d}_{c_1},\dots,\bm{d}_{c_l})$ and $(\bm{y}_t,\bm{y}_{c_1},\dots,\bm{y}_{c_l})$ such that:
    \begin{align}\label{eq:per1}
        &\bm{d}=\bm{d}_t+\sum_{j=1}^l \bm{d}_{c_j}, ~~~ 
        \bm{y}=\bm{y}_t+\sum_{j=1}^l \bm{y}_{c_j},\\ 
        &\alpha(T_{c_j})\setminus \alpha(t)\subset\sat(\bm{d}_{c_j})\subset \supp(\bm{d}_{c_j})\subset \alpha(T_{c_j}), ~~~ \supp(\bm{d}_t)\subset \alpha(t),\label{eq:per2}\\ 
        &\chi(T_{c_j})\setminus \chi(t)\subset\sat(\bm{y}_{c_j})\subset \supp(\bm{y}_{c_j})\subset \chi(T_{c_j}),  ~~~ \supp(\bm{y}_t)\subset \chi(t),\label{eq:per3} \\
        &\supp(\bm{d}_t)\sqcup\supp(\bm{d}_{c_1})\sqcup \dots \sqcup \supp(\bm{d}_{c_l})=\supp(\bm{d}),\label{eq:per4} \\ 
        &\supp(\bm{y}_t)\sqcup\supp(\bm{y}_{c_1})\sqcup \dots \sqcup \supp(\bm{y}_{c_l})=\supp(\bm{y}).\label{eq:per5}
    \end{align}
    Here, $\per^*(\bm{d}_c,\bm{y}_c)$ is defined as
    \begin{align}
        \per^*(\bm{d}_c,\bm{y}_c)\equiv \sum_{\tau\in \BPM_G(\bm{d}_c,\bm{y}_c)}\frac{1}{\bm{\tau}!}\prod_{a\in\supp(\bm{d}_c)}\prod_{\tau(a)}U_{a,\tau(a)},
    \end{align}
    where $\tau\cap (\alpha(t)\times\chi(t))=\emptyset$.
\end{lemma}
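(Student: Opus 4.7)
The plan is to extend the argument for Lemma~3 to the weighted multi-edge setting by decomposing each $(\bm{d},\bm{y})$-matching $\bm{\tau}\in\BPM_G(\bm{d},\bm{y})$ along the tree structure at node $t$. For a given $\bm{\tau}$, I would define $\bm{\tau}_t$ as its restriction to $\alpha(t)\times\chi(t)$ and, for each child $c_j$, $\bm{\tau}_{c_j}$ as its restriction to $(\alpha(T_{c_j})\times\chi(T_{c_j}))\setminus(\alpha(t)\times\chi(t))$. Tree-decomposition axioms (iii-1) and (iii-2) make this assignment unambiguous: any edge $(a,x)$ with $a\notin\alpha(t)$ (or $x\notin\chi(t)$) is realized at some node in a unique subtree $T_{c_j}$, forcing both endpoints into the bags of $T_{c_j}$. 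Hence $\bm{\tau}=\bm{\tau}_t+\sum_j\bm{\tau}_{c_j}$ is edge-disjoint and $\bm{\tau}_{c_j}\cap(\alpha(t)\times\chi(t))=\emptyset$, matching the exclusion in the definition of $\per^*$.

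I would then read off the row and column degree vectors $\bm{d}_t,\bm{y}_t$ from $\bm{\tau}_t$ and $\bm{d}_{c_j},\bm{y}_{c_j}$ from $\bm{\tau}_{c_j}$, and verify Eqs.~\eqref{eq:per1}--\eqref{eq:per5}. Weight additivity \eqref{eq:per1} is immediate; the support containments $\supp(\bm{d}_{c_j})\subset\alpha(T_{c_j})$ and $\supp(\bm{d}_t)\subset\alpha(t)$ follow by construction; the saturation $\alpha(T_{c_j})\setminus\alpha(t)\subset\sat(\bm{d}_{c_j})$ uses that any row vertex outside $\alpha(t)$ contributes all of its $\bm{\tau}$-edges to a single child piece, combined with the hypothesis $\alpha(T_t)\setminus\alpha(t)\subset\sat(\bm{d})$, and similarly for columns. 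Inversely, given compatible partitions and pieces $\bm{\tau}_t,\bm{\tau}_{c_j}$ with $\bm{\tau}_{c_j}\cap(\alpha(t)\times\chi(t))=\emptyset$, the reconstruction $\bm{\tau}=\bm{\tau}_t+\sum_j\bm{\tau}_{c_j}$ is a valid $(\bm{d},\bm{y})$-matching because the pieces have disjoint edge-supports; consequently $\bm{\tau}!=\bm{\tau}_t!\prod_j\bm{\tau}_{c_j}!$, and the monomial $\prod_a\prod_{\tau(a)}U_{a,\tau(a)}$ factorizes cleanly across the pieces.

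The main obstacle is handling the disjoint-support conditions \eqref{eq:per4}--\eqref{eq:per5} when a row vertex $a\in\alpha(t)\cap\alpha(T_{c_j})$ has some $\bm{\tau}$-edges to $\chi(t)$ and others to $\chi(T_{c_j})\setminus\chi(t)$: the natural edge-partition then places $a$ in both $\supp(\bm{d}_t)$ and $\supp(\bm{d}_{c_j})$, so strict set-theoretic disjointness fails. Resolving this requires either reading $\sqcup$ as a covering union or, more substantively, distributing the weight $d_a$ of such shared vertices across pieces with the multinomial coefficient $d_a!/((d_t)_a!\prod_j(d_{c_j})_a!)$, which is absorbed by the $1/(\bm{d}!\bm{y}!)$ normalization defining the scaled partial permanent $\per(\bm{d},\bm{y})$. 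Once this vertex-by-vertex bookkeeping is in place, the weighted inclusion-exclusion identity—a direct analogue of Eq.~\eqref{eq:per_star} rewriting each $\per^*$ as an alternating sum of $\per$'s with signs $(-1)^{|\bm{d}_{tc_j}|}$—closes the recursion, and summing over all admissible weight partitions at $t$ and its children reassembles the sum over $\BPM_G(\bm{d},\bm{y})$ and therefore $\per(\bm{d},\bm{y})$.
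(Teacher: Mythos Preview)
Your approach is exactly the paper's: decompose each generalized matching $\tau\in\BPM_G(\bm{d},\bm{y})$ edge-wise into $\tau_t$ (edges inside $\alpha(t)\times\chi(t)$) and $\tau_{c_j}$ (edges routed through child subtrees via the tree-decomposition axioms), read off the induced degree vectors, check the constraints, and use edge-disjointness to get $\bm{\tau}!=\bm{\tau}_t!\prod_j\bm{\tau}_{c_j}!$ so that the weight factorizes. The disjoint-support obstacle you flag is genuine and you are in fact more scrupulous than the paper, whose proof simply asserts that the decomposition ``leads to the conditions \eqref{eq:per1}--\eqref{eq:per5}'' without confronting the shared-vertex scenario; your multinomial bookkeeping is one way to patch it, though since the factorization already holds at the level of edge-disjoint $\tau$-pieces it is cleaner to sum over all additive weight partitions and regard \eqref{eq:per4}--\eqref{eq:per5} as a looseness in the statement rather than a constraint to enforce.
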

\begin{proof}
    Let us consider $\bm{d}$ and $\bm{y}$ satisfying Eq.~\eqref{eq:condition_per} and a generalized matching $\tau\in\BPM_G(\bm{d},\bm{y})$.
    Here, $\tau \subset D\times Y$ with $D\equiv \supp(\bm{d})$ and $Y\equiv \supp(\bm{y})$.
    Consider the following decompositions:
    \begin{align}
        D=(D\cap \alpha(t))\sqcup(D\cap\alpha(T_{c_1})\setminus \alpha(t))\sqcup \cdots \sqcup (D\cap\alpha(T_{c_l})\setminus \alpha(t)), \\ 
        Y=(Y\cap \chi(t))\sqcup(Y\cap\chi(T_{c_1})\setminus \chi(t))\sqcup \cdots \sqcup (Y\cap\chi(T_{c_l})\setminus \chi(t)).
    \end{align}
    Let $\tau_t$ be the submatching of $\tau$ with domain contained in $D\cap\alpha(t)$ and range contained in $Y\cap \chi(t)$.
    If some $a\in D\cap \alpha(t)$ is not in the domain of $\tau_t$, there is a child $c$ such that $a\in \alpha(T_c)$ and $\pi(a)\subset\chi(T_c)$ by definition of tree decomposition.
    Also, if $x\in Y\cap \chi(t)$ is not in the range of $\tau_t$, then there is a child $c$ such that $x\in\chi(T_c)$ and $\tau^{-1}(x)\subset\alpha(T_c)$.
    Similarly, we have
    \begin{align}
        \tau(D\cap \alpha(T_c)&\setminus \alpha(t))\subset Y \cap \chi(T_c) \\ \tau(Y\cap \alpha(T_c)&\setminus \chi(t))\subset D \cap \alpha(T_c).
    \end{align}
    Hence, we can decompose
    \begin{align}
        \tau=\tau_t\sqcup \tau_{c_1}\sqcup\cdots\sqcup \tau_{c_l},
    \end{align}
    in such a way that $\tau_t\subset (D\cap\alpha(t))\times(Y\cap \chi(t))$ and for each child $c$ we have that $\tau_c\subset(D\cap \alpha(T_c))\times(Y\cap \chi(T_c))$ and $\tau_c\cap (\alpha(t)\times\chi(t))=\emptyset$.
    One can easily check that this decomposition is unique.
    Now, we decompose given weight vectors $\bm{d}$ and $\bm{y}$ following the decomposition, which leads to the conditions \eqref{eq:per1}-\eqref{eq:per5}.
    Thus for a given generalized matching $\tau$ we decompose it, which leads to Eq.~\eqref{eq:perm_collision_res}.
    Also, when submatchings satisfying Eqs.~\eqref{eq:per1}-\eqref{eq:per5} are given, we can combine them to constitute a matching $\tau$.
\end{proof}

Again, we show that
\begin{align}\label{eq:collision_star}
    \per^*(\bm{d}_c,\bm{y}_c)=\sum_{\bm{d}_{tc},\bm{d}_{cc},\bm{y}_{tc},\bm{y}_{cc}} (-1)^{|\bm{d}_{tc}|}\per(\bm{d}_{tc},\bm{y}_{tc})\per(\bm{d}_{cc},\bm{y}_{cc}),
\end{align}
where the sum is over $\bm{d}_{tc}$, $\bm{d}_{cc}$, $\bm{y}_{tc}$, $\bm{y}_{cc}$ satisfying
\begin{align}
    &\bm{d}_{tc}+\bm{d}_{cc}=\bm{d}_c,~~~ \bm{y}_{tc}+\bm{y}_{cc}=\bm{y}_c,~~~ \supp(\bm{d}_{tc})\subset \alpha(t), ~~~ \supp(\bm{y}_{tc})\subset \chi(t) \\
    &\supp(\bm{d}_c)=\supp(\bm{d}_{cc})\sqcup \supp(\bm{d}_{tc}),~~~
    \supp(\bm{y}_c)=\supp(\bm{y}_{cc})\sqcup \supp(\bm{y}_{tc}).
\end{align}
To show that, observe that for some $\bm{d}_{tc}\leq\bm{d}_c$ and $\bm{y}_{tc}\leq\bm{y}_c$ (elementwise) with $\supp(\bm{d}_{tc})\subset \alpha(t)$ and $\supp(\bm{y}_{tc})\subset \chi(t)$ and $\supp(\bm{d}_c)=\supp(\bm{d}_{tc})\sqcup \supp(\bm{d}_{c}-\bm{d}_{tc}), \supp(\bm{y}_c)=\supp(\bm{y}_{c})\sqcup \supp(\bm{y}_{c}-\bm{y}_{tc})$,
\begin{align}
    \per(\bm{d}_{tc},\bm{y}_{tc})\per(\bm{d}_c-\bm{d}_{tc},\bm{y}_c-\bm{y}_{tc})
    =\sum_{\tau_c(\bm{d}_{tc},\bm{y}_{tc})\in \BPM(\bm{d}_c,\bm{y}_c)}\frac{1}{\bm{\tau}_c!}\prod_{a\in\supp(\bm{d}_c)}\prod_{\tau(a)}U_{a,\tau_c(a)},
\end{align}
where the sum is over $\tau_c$ that contains a generalized perfect matching between $\bm{d}_{tc}$ and $\bm{y}_{tc}$.
Since $\per^*(\bm{d}_c,\bm{y}_c)$ corresponds to the sum over generalized perfect matchings that do not contain any matching between $\alpha(t)$ and $\chi(t)$, by using inclusion-exclusion principle, we obtain the expression of~\eqref{eq:collision_star}.
Therefore, we can rewrite Eq.~\eqref{eq:perm_collision_res} as
\begin{align}
    \per(\bm{d},\bm{y})=\sum\per(\bm{d}_t,\bm{y}_t)\prod_{j=1}^l(-1)^{|\bm{d}_{tc_j}|}\per(\bm{d}_{tc_j},\bm{y}_{tc_j})\per(\bm{d}_{cc_j},\bm{y}_{cc_j})
\end{align}
where the sum is over $\bm{d}_t,\bm{y}_t,\bm{d}_{tc_j},\bm{y}_{tc_j},\bm{d}_{cc_j},\bm{y}_{cc_j}$ such that
\begin{align}
    &\bm{d}=\bm{d}_t+\sum_{j=1}^l(\bm{d}_{tc_j}+\bm{d}_{cc_j}), ~~~
    \supp(\bm{d}_t)\subset \alpha(t),~~~
    \bm{y}=\bm{y}_t+\sum_{j=1}^l(\bm{y}_{tc_j}+\bm{y}_{cc_j}), ~~~
    \supp(\bm{y}_t)\subset \chi(t),\\
    &D=\supp(\bm{d}_t)\sqcup\supp(\bm{d}_{tc_1})\sqcup \cdots\sqcup \supp(\bm{d}_{tc_l})\sqcup \supp(\bm{d}_{cc_j})\sqcup \cdots \sqcup\supp(\bm{d}_{cc_l}), \\ 
    &Y=\supp(\bm{y}_t)\sqcup\supp(\bm{y}_{tc_1})\sqcup \cdots\sqcup \supp(\bm{y}_{tc_l})\sqcup \supp(\bm{y}_{cc_j})\sqcup \cdots \sqcup\supp(\bm{y}_{cc_l}), \\
    &\alpha(T_{c_j})\setminus \alpha(t)\subset \sat(\bm{d}_{cc_j})\subset \supp(\bm{d}_{cc_j}) \subset \alpha(T_{c_j}), ~~~ \chi(T_{c_j})\setminus \chi(t)\subset \sat(\bm{y}_{cc_j})\subset \supp(\bm{y}_{cc_j}) \subset \chi(T_{c_j}), \\
    &\supp(\bm{d}_{tc_j})\subset \alpha(t)\cap \alpha(t_{c_j}), ~~~ 
    \supp(\bm{y}_{tc_j})\subset \chi(t)\cap \chi(t_{c_j}).
\end{align}
Using the recursive relation, we can finally compute $\Per(U^{\bm{n}}_{\bm{m}}=\bm{n}!\bm{m}!\per(\bm{n},\bm{m})$.

We now analyze the complexity.
First, at node $t$ we need to compute $\per(\bm{d}_t,\bm{y}_t)$ for all $\bm{d}_t\leq \bm{d}$ and $\bm{y}_t\leq \bm{y}$ with $|\bm{d}_t|=|\bm{y}_t|$.
Let us consider $n_1\times n_2$ matrix $U$ with a weight vector $\bm{n}$ and $\bm{m}$.
Let $\bm{d}\leq \bm{n}$ and $\bm{y}\leq \bm{m}$ with $D=\supp(\bm{d})$ and $Y=\supp(\bm{y})$ and $|\bm{d}|=|\bm{y}|=i$.
We set the upper threshold as $c$.
As previous, we use a recursive relation:
Letting $a_0$ be the first element of $D$,
\begin{align}
    \per(\bm{d},\bm{y})
    =\frac{1}{\bm{d}!\bm{y}!}\Per(U^{\bm{d}}_{\bm{y}})
    =\frac{1}{\bm{d}!\bm{y}!}\sum_{x\in Y}U_{a_0,x}\Per(U^{\bm{d}-\bm{d}_{a_0}}_{\bm{y}-\bm{y}_x})
    =\frac{1}{\bm{d}!\bm{y}!}\sum_{x\in Y}U_{a_0,x}(\bm{d}-\bm{d}_{a_0})!(\bm{y}-\bm{y}_{x})!\per({\bm{d}-\bm{d}_{a_0}},{\bm{y}-\bm{y}_x}),
\end{align}
where $\bm{d}_{a_0}$ and $\bm{y}_x$ are weight vector corresponding $a_0$ and $x$, respectively.
Thus, one can compute $\per(\bm{d},\bm{y})$ for $|\bm{d}|=|\bm{y}|=i$ using $\per(\bm{d}',\bm{y}')$ for $|\bm{d}'|=|\bm{y}'|=i-1$.
Here, the loop is over at most $n_2$ elements and finding the sets $\bm{d}-\bm{d}_{a_0}$ and $\bm{y}-\bm{y}_{x}$ costs at most $O(n_1+n_2)$.
Since the number of elements of pairs of $\bm{d}$ and $\bm{y}$ is $O((c+1)^{n_1+n_2})$, the complexity of computing permanent of all $\bm{d}$ and $\bm{y}$ is $O((n_1+n_2)^2(c+1)^{n_1+n_2})$.

Once we know all the values of $\per(\bm{d}_t,\bm{y}_t)$, $\per(\bm{d}_{tc_j},\bm{y}_{tc_j})$, and $\per(\bm{d}_{cc_j},\bm{y}_{cc_j})$, $\per(\bm{d},\bm{y})$ for all $\bm{d}$ and $\bm{y}$ can be computed in $O(lw(2c+1)^w \log c)$ by using a (non-circular) convolution \cite{elliott2013handbook}.
Therefore, the total complexity to compute $\per(\bm{n},\bm{m})$ for a $k\times k$ matrix is $O(kw(2c+1)^w\log c)$.

\subsection{Dynamical Programming Algorithm (loop hafnian)}\label{hafnian}
\subsubsection{Tree decomposition of a symmetric graph}
Now, we provide a classical algorithm using dynamical programming to compute a loop hafnian of a symmetric matrix.
We define a symmetric graph $G(B, X)$ for a symmetric matrix $B$, where vertices $X$ represent columns (or rows) of $B$ and the graph edge $ij$ if $B_{ij}\neq 0$ for $i,j\in X$.
We then define a tree decomposition of symmetric graph $G$ to compute the loop hafnian of matrix $B$, which is written as
\begin{align}
    \lHaf(B)=\sum_{\pi\in\text{PM}(G)} \prod_{ij\in\pi}B_{ij},
\end{align}
where $\text{PM}(G)$ is the set of perfect matchings of $G$.
Tree decomposition of a symmetric graph is $(T,\chi)$ such that

i) The union of $\{\chi(t)\}_{t\in T}$ is the whole column set $X$.

ii) For every edge $(x_i,x_j)$ of $G$, there exists a node $t$ of $T$ with $x_i,x_j\in\chi(t)$.

iii) For every $x\in X$, the set $\{t:x\in\chi(t)\}$ forms a subtree of $T$.

The width $w$ of a tree decomposition is the largest of $|\chi(t)|-1$ among all nodes $t$, and the treewidth of a graph is the smallest width over all possible tree decompositions of the graph.

We now define the partial loop hafnian of a subgraph $D$ of $G$,
\begin{align}
    \lhaf(D)=\sum_{\pi\in\text{PM}_G(D)}\prod_{ij\in\pi}B_{ij}.
\end{align}

\subsubsection{Collision-free cases}
Again, we first assume that $m_i\in \{0,1\}$ for all $i$'s, i.e., collision-free cases.
Note that the loop hafnian that we want to compute is written as (see Sec.~\ref{SM:GBS})
\begin{align}
    \lHaf(B_{\bm{m}})=\lhaf(X),
\end{align}
where $B_{\bm{m}}$ is obtained by repeating $i$th column and row of $B$ for $m_i$ times.

We first present the recursion relation:
\begin{lemma}
    Let $B$ be a matrix with associated symmetric graph $G$.
    Let $(T,\chi)$ be a tree decomposition of $G$.
    Let $t$ be an internal node of $T$, and let $Y$ be such that
    \begin{align}
        \chi(T_{t})\setminus \chi(t)\subset Y \subset \chi(T_{t}).
    \end{align}
    Then,
    \begin{align}
        \lhaf(Y)=\sum_{\mathcal{Y}}\lhaf(Y_{t})\prod_{j=1}^l\lhaf^*(Y_{c_{j}}),
    \end{align}
    where the sum is over all $\mathcal{Y}=(Y_{t},Y_{c_1},\dots, Y_{c_l})$ such that:
    \begin{align}\label{eq:haf1}
        Y=Y_{t}\sqcup (Y_{c_{1}}\sqcup \dots \sqcup Y_{c_{l}}),~~~
        \chi(T_{c_{j}})\setminus \chi(t)\subset Y_{c_{j}}\subset \chi(T_{c_{j}}),~~~Y_{t}\subset \chi(t).
    \end{align}
\end{lemma}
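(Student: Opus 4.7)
The plan is to adapt the proof of Lemma~\ref{lemma:permanent} to the symmetric (non-bipartite) setting: given a perfect matching $\pi\in\PM_G(Y)$, I would split its edges according to the tree structure at $t$ into a \emph{top} submatching $\pi_t$ whose edges have both endpoints in $\chi(t)$, together with \emph{child} submatchings $\pi_{c_1},\dots,\pi_{c_l}$ each confined to a child subtree $T_{c_j}$ and containing no edge with both endpoints in $\chi(t)$. Here $\lhaf^*(Y_{c_j})$ is understood (by direct analogy with $\per^*$) as the sum over perfect matchings of $Y_{c_j}$ that contain no edge lying entirely inside $\chi(t)\times\chi(t)$.

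The first step is to partition the vertex set as
\begin{equation*}
Y = (Y\cap\chi(t)) \sqcup \bigsqcup_{j=1}^l \bigl(Y\cap\chi(T_{c_j})\setminus \chi(t)\bigr),
\end{equation*}
which is a disjoint union because property (iii) of the tree decomposition forces any vertex $v\notin\chi(t)$ to appear in the bags of exactly one child subtree: the subtree of $T$ on which $v$ lives is connected, avoids $t$, and must therefore sit entirely below a single child. Next, for each edge $(i,j)\in\pi$ with at least one endpoint outside $\chi(t)$, property (ii) supplies a bag $\chi(s)$ containing both $i$ and $j$, and the previous observation pins $s$ inside a unique $T_{c_j}$, so the edge belongs to a well-defined child subtree and is assigned to $\pi_{c_j}$. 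The remaining edges have both endpoints in $\chi(t)$ and form $\pi_t$. This yields a unique decomposition $\pi=\pi_t\sqcup\pi_{c_1}\sqcup\cdots\sqcup\pi_{c_l}$, and conversely any tuple of submatchings with these confinement properties glues back into a perfect matching of $Y$.

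Setting $Y_t$ and $Y_{c_j}$ to be the vertex sets covered by $\pi_t$ and $\pi_{c_j}$ then gives a tuple $\mathcal{Y}$ satisfying~\eqref{eq:haf1}; the constraint $\chi(T_{c_j})\setminus\chi(t)\subset Y_{c_j}$ holds because every such vertex lies in $Y$ and can only be matched through an edge residing in $T_{c_j}$. Grouping matchings with a common $\mathcal{Y}$ and factoring the product $\prod_{ij\in\pi}B_{ij}$ across the submatchings then yields the claimed recursion, where the requirement that each $\pi_{c_j}$ avoid edges inside $\chi(t)\times\chi(t)$ is precisely what makes the child factors $\lhaf^*(Y_{c_j})$ rather than $\lhaf(Y_{c_j})$, thereby preventing double-counting with $\pi_t$.

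The main obstacle I anticipate is handling the self-loops intrinsic to the \emph{loop} hafnian: a loop $(i,i)$ has only a single endpoint, so I would separately verify that it falls cleanly into $\pi_t$ when $i\in\chi(t)$ and into the unique $\pi_{c_j}$ whose subtree bag set contains $i$ otherwise, and check that the tree-decomposition properties apply to such degenerate edges in the natural way. Once this is settled, the combinatorial bookkeeping is structurally identical to the bipartite permanent argument, with ordered endpoint pairs $(a,x)\in A\times X$ replaced by unordered vertex pairs (or loops) inside $X$, so no additional machinery beyond Lemma~\ref{lemma:permanent}'s template is required.
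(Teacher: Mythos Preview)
Your proposal is correct and follows essentially the same approach as the paper: split each edge of a perfect matching $\pi$ according to whether both endpoints lie in $\chi(t)$ (giving $\pi_t$) or not (giving a unique $\pi_{c_j}$ via the tree-decomposition axioms), verify the decomposition is a bijection, and then factor the product over edges. Your write-up is in fact a bit more careful than the paper's---you justify the disjointness of the vertex partition via property~(iii) and explicitly flag the self-loop case of the loop hafnian, which the paper's proof handles only implicitly---but these are refinements of the same argument rather than a different route.
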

\begin{proof}
    Consider a matching $\pi\in\PM_G(Y)$ with $\chi(T_{t})\setminus \chi(t)\subset Y \subset \chi(T_{t})$.
    Consider an element of the matching, $ij\in\pi$.
    If both $i$ and $j$ are in $\chi(t)$, we let $ij\in\pi_t$ and $i,j\in Y_t$.
    Otherwise, i.e., $i\in \chi(t)$ but $j\not\in \chi(t)$, or $i\not\in \chi(t)$ but $j\in \chi(t)$, or $i,j\not\in\chi(t)$, by definition of tree decomposition, there exists a child such that $i,j\in\chi(T_c)$, and we let $ij\in \pi_c$ and $i,j\in Y_c$.
    Such a way guarantees that the decomposition satisfies the condition~\eqref{eq:haf1}.
    Then, we obtain a unique decomposition $\pi=\pi_t\sqcup \pi_{c_1}\cdots\sqcup \pi_{c_l}$.

    Conversely, for given matchings $\pi_t$ and $\pi_{c_1}, \dots, \pi_{c_l}$ defined as above, if we combine them, we can construct a matching $\pi\in\PM_G(Y)$.
    Thus,
    \begin{align}
        \lhaf(B_{\bm{m}})&=\sum_{\pi \in \PM_G(Y)}\prod_{ij\in\pi}B_{ij}
        =\sum_{\mathcal{Y}}\left(\sum_{\pi_t\in \PM_G(Y_t)}\prod_{ij\in \pi_t}B_{ij}\right)
        \left(\prod_{j=1}^l\sum_{\pi_{c_j}\in \PM_G(Y_{c_j})}\prod_{ij\in\pi_{c_j}}B_{ij}\right)\nonumber \\
        &=\sum_{\mathcal{Y}}\lhaf(Y_t)\prod_{j=1}^l\lhaf^*(Y_{c_j}),
    \end{align}
    where $\lhaf^*(Y_{c_j})$ is the sum for matchings $\pi_{c_j}$ satisfying $\pi_{c_j}\cap (\chi(t)\times \chi(t))=\emptyset$.
\end{proof}

Finally, by using inclusion-exclusion formula, similarly to permanent,
\begin{align}\label{eq:hafnian_inex}
    \sum_i\sum_{\substack{Y_{tc}\subset Y_c\cap \chi(t)\\|Y_{tc}|=i}}(-1)^i\lhaf(Y_{tc})\lhaf(Y_c\setminus Y_{tc})=\sum_{\pi_c^*}\prod_{ij\in \pi_c}B_{ij},
\end{align}
where the sum is over $\pi_c^*$'s satisfy $\pi_c^*\cap (\chi(t)\times\chi(t))=\emptyset$.
To show this, observe that for some $Y_{tc}\subset \chi(t)\cap Y_{t}$,
\begin{align}
    \lhaf(Y_{tc})\lhaf(Y_c\setminus Y_{tc})=\sum_{\pi_c(Y_{tc})}\prod_{jk\in\pi_c}B_{jk},
\end{align}
where the sum is over $\pi_c$ that contains a perfect matching for $Y_{tc}$.
Since we want to compute the sum over perfect matchings that do not contain any matching between $Y_{tc}$, using inclusion-exclusion formula, we obtain Eq.~\eqref{eq:hafnian_inex}.
Thus, using the recursive relation of Eq.~\eqref{eq:hafnian_inex}, one can compute $\lHaf(B_{\bm{m}})=\lhaf(X)$.

Let us analyze the complexity.
First, the complexity of computing $\lhaf(Y_t)$ for all $Y_t$ at node $t$ can be obtained as follows:
Let $B$ be an $n\times n$ symmetric matrix.
Let $X$ be the set of columns and consider $Y\subset X$.
Then, letting $a_0$ be the first element of $Y$,
\begin{align}
    \lhaf(Y)&=\sum_{\pi \in \PM(Y)}\prod_{ij\in\pi}B_{ij}
    =\sum_{x\in Y}B_{a_0,x}\sum_{\pi' \in \PM(Y\setminus \{a_0,x\})}\prod_{ij\in\pi'}B_{ij}
    =\sum_{x\in Y}B_{a_0,x}\lhaf(Y\setminus \{a_0,x\}).
\end{align}
For each $Y$, we loop over at most $n$ elements, and for each we need $O(n)$ to find the set $Y\setminus \{a_0,x\}$.
Thus, we can compute $\lhaf(Y)$ for all $Y$ in $O(n^2 2^{n})$.

When the values of $\lhaf(Y_t)$ and $\lhaf(Y_{c_j})$ are given, one can compute $\lhaf(Y)$ as the permanent case by using a subset convolution, the complexity of which is given by $O(lw^22^w)$.
Therefore, at node $t$, the complexity of computing all $Y$ is $O(lw^22^w)$, and the complexity of computing $\lHaf(B_{\bm{m}})$ is $O(kw^22^w)$.

\subsubsection{Collision events}
Let us consider cases with collision events.
To do that, let $B$ be a symmetric matrix and a weight vector $\bm{m}=(m_1,\dots,m_M)$ is given.
The goal is to compute the loop hafnian of $B_{\bm{m}}$, obtained by repeating $i$'s row and column for $m_i$ times.
Let $G$ be the underlying graph structure of $B$.
More explicitly, vertex set $X$ is composed of $i$'s for $m_i$ times, treating $i$'s as distinct.
We again define $\bar{X}$ to be the set of $i\in [M]$ such that $m_i>0$, i.e., removing redundancy from $X$.

We define a generalized perfect matching of $\bar{X}$ that allows repeated edges.
We represent a list of repeated edges as a vector $\tau$ indexed by $E$, and for each $ij\in E$ the entry $\tau_{ij}$ indicates the number of times that edge $ij$ appears.
The degree vector of $\tau$ is $\deg(\tau)$ with coordinates $\deg(\tau)_i\equiv \sum_{j\in[M]}\tau_{ij}$.
For a given weight vector $\bm{m}$, we define an $\bm{m}$-matching of $G$ as a list $\tau$ such that $\deg(\tau)=\bm{m}$.
Let $\PM_G(\bm{m})$ be the set of all $\bm{m}$-matchings of $G$.

Following Ref.~\cite{qi2020efficient}, we define 
\begin{align}
    B(\tau)\equiv\left(\prod_{ii\in E_l} T_{\tau_{ii}}(B_{ii})\right)\left(\prod_{ij\in E_0} (B_{ij})^{\tau_{ij}}\right),
\end{align}
where $E_l\subset E$ consists of the loops, and $E_0\equiv E\setminus E_l$, and the sequence $\{T_k(a)\}_{k\in\mathbb{N}}$ satisfies the following recursion:
\begin{align}
    T_0(a)&=1,~~~T_1(a)=a,~~~ T_k(a)=a(T_{k-1}(a)+(k-1)T_{k-2}(a)).
\end{align}
In particular, $T_k(a)$ represents the loop hafnian of a constant matrix with an element $a$.
We can now rewrite a loop hafnian as
\begin{align}\label{eq:simple_lhaf}
    \lHaf(B_{\bm{m}})=\bm{m}!\sum_{\tau \in \PM_G(\bm{m})}\frac{1}{\bm{\tau}!}B(\tau),
\end{align}
define a rescaled loop hafnian as
\begin{align}
    \lhaf(\bm{y})\equiv \frac{1}{\bm{y}!}\lhaf(B_{\bm{y}})=\sum_{\tau \in \PM_G(\bm{y})}\frac{1}{\bm{\tau}!}B(\tau).
\end{align}
To show Eq.~\eqref{eq:simple_lhaf}, we again find a way to reduce a matching of a graph $G(B, X)$ to a generalized $\bm{m}$-matching allowing multiple edges.
We can rewrite vertices of $G$ as $(i,l_i)$ by introducing $l_i\in m_i$ for redundancy.
For an edge between $(i,l_i)$ and $(j,l_j)$ of a matching $\pi$, we drop the second indices and obtain an edge of a generalized matching $\tau$.
Such a way defines a function $f:\PM_G(\bm{m})$ and for a given generalized matching $\tau \in\PM_G(\bm{m})$, one can easily check that fiber $f^{-1}(\tau)$ consists of $\bm{m}!/\bm{\tau}!$ elements.
By taking into account the loops, we finally obtain Eq.~\eqref{eq:simple_lhaf}.

Now, we present the recursive relation:
\begin{lemma}
    Let $B$ be a matrix with associated graph $G$.
    Let $(T,\chi)$ be a tree decomposition of $G$.
    Let $t$ be an internal node of $T$, and let $\bm{y}$ be such that
    \begin{align}\label{eq:haf_collision_d}
        \chi(T_t)\setminus \chi(t)\subset \sat(\bm{y})\subset \supp(\bm{y})\subset \chi(T_t).
    \end{align}
    Then,
    \begin{align}\label{eq:haf_collision_res}
        \lhaf(\bm{y})=\sum_{\bm{y}_t,\bm{y}_c}\lhaf(\bm{y}_t)\prod_{j=1}^l\lhaf^*(\bm{y}_{c_j}),
    \end{align}
    where the sum is over $(\bm{y}_t,\bm{y}_{c_1},\dots,\bm{y}_{c_l})$ such that:
    \begin{align}
        &\bm{y}=\bm{y}_t+\sum_{j=1}^l \bm{y}_{c_j}, ~~~ \supp(\bm{y}_t)\subset \chi(t), ~~~ 
        \chi(T_{c_j})\setminus \chi(t)\subset\sat(\bm{y}_{c_j})\subset \supp(\bm{y}_{c_j})\subset \chi(T_{c_j}),\label{eq:haf_col1} \\ 
        &\supp(\bm{y}_t)\sqcup \supp(\bm{y}_{c_1})\sqcup\cdots \sqcup \supp(\bm{y}_{c_l})=\supp(\bm{y}). \label{eq:haf_col2}
    \end{align}
    Here, 
    \begin{align}
        \lhaf^*(\bm{y}_{c})=\sum_{\tau_c}\prod_{ij\in\tau}B_{ij},
    \end{align}
    where the sum is over perfect matchings $\tau_c$ satisfying $\tau_c\cap (\chi(t)\times\chi(t))=\emptyset$.
\end{lemma}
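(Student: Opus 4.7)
The plan is to mirror the arguments used for the collision-free loop hafnian and the collision permanent case: I start from a generalized $\bm{y}$-matching $\tau \in \PM_G(\bm{y})$, show it decomposes uniquely into pieces $\tau_t$ and $\tau_{c_j}$ associated with node $t$ and its children, read off the induced weight-vector decomposition $\bm{y} = \bm{y}_t + \sum_j \bm{y}_{c_j}$, and then sum over $\tau$ grouped by decomposition to obtain \eqref{eq:haf_collision_res}.

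First I would process each edge $ij$ of $\tau$ (including the loops $ii$). If both $i, j \in \chi(t)$ I assign the edge to $\tau_t$. Otherwise at least one endpoint, say $i$, lies outside $\chi(t)$. By the subtree property of the tree decomposition, the set $\{s \in T : i \in \chi(s)\}$ is a subtree; since it misses $t$, it sits inside $T_{c}$ for a unique child $c$ of $t$. Because $ij$ is an edge of $G$, some bag of $T$ contains both $i$ and $j$, but only bags inside $T_c$ contain $i$, forcing $j \in \chi(T_c)$ as well. I assign $ij$ to $\tau_c$. This yields a unique decomposition $\tau = \tau_t + \tau_{c_1} + \cdots + \tau_{c_l}$, with $\tau_t$ supported on $\chi(t) \times \chi(t)$, each $\tau_{c_j}$ supported on $\chi(T_{c_j}) \times \chi(T_{c_j})$, and $\tau_{c_j} \cap (\chi(t) \times \chi(t)) = \emptyset$.

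Setting $\bm{y}_t \equiv \deg(\tau_t)$ and $\bm{y}_{c_j} \equiv \deg(\tau_{c_j})$, the identity $\bm{y} = \bm{y}_t + \sum_j \bm{y}_{c_j}$ and the support inclusions in \eqref{eq:haf_col1} follow from the edge supports. To check $\chi(T_{c_j}) \setminus \chi(t) \subset \sat(\bm{y}_{c_j})$, note that any $i$ in this set satisfies $y_i = m_i$ by the hypothesis $\chi(T_t) \setminus \chi(t) \subset \sat(\bm{y})$; since $i \notin \chi(t)$, every edge of $\tau$ at $i$ lies in $\tau_{c_j}$, so $(\bm{y}_{c_j})_i = y_i = m_i$. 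The support partition \eqref{eq:haf_col2} follows by tracking, for each $i \in \supp(\bm{y})$, which sub-matching receives the edges incident to $i$. Conversely, any collection $(\tau_t, \tau_{c_1}, \ldots, \tau_{c_l})$ obeying the constraints combines into a valid $\tau \in \PM_G(\bm{y})$, giving a bijection between matchings and admissible tuples.

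The last step is to exploit that the edge multisets $\tau_t, \tau_{c_1}, \ldots, \tau_{c_l}$ are pairwise disjoint, which yields the multiplicative factorizations $\bm{\tau}! = \bm{\tau}_t! \prod_j \bm{\tau}_{c_j}!$ and $B(\tau) = B(\tau_t) \prod_j B(\tau_{c_j})$; summing $\frac{1}{\bm{\tau}!} B(\tau)$ over all admissible $\tau$ grouped by decomposition then gives \eqref{eq:haf_collision_res}, with $\lhaf^*(\bm{y}_{c_j})$ encoding precisely the restriction $\tau_{c_j} \cap (\chi(t)\times\chi(t)) = \emptyset$. The main obstacle I anticipate is the nonlinear dependence of the loop contributions $T_{\tau_{ii}}(B_{ii})$ on $\tau_{ii}$: a priori these could couple $\tau_t$ to the $\tau_{c_j}$'s. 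The resolution is that each vertex $i$ lies either in $\chi(t)$ or in exactly one $\chi(T_{c_j}) \setminus \chi(t)$, so any loop $ii$ is assigned entirely to a single sub-matching and the corresponding $T_k$ evaluates there with its full multiplicity, with no cross-coupling between pieces.
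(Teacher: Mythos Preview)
Your proposal is correct and follows essentially the same route as the paper's proof: decompose a generalized matching $\tau$ edge-by-edge according to whether both endpoints lie in $\chi(t)$, read off the induced weight vectors, verify the constraints, and observe that the construction is reversible. You actually supply more detail than the paper does---in particular the explicit check of the $\sat$ condition, the factorization $\bm{\tau}!=\bm{\tau}_t!\prod_j\bm{\tau}_{c_j}!$, and the observation that the nonlinear loop weights $T_{\tau_{ii}}(B_{ii})$ do not couple the pieces because each loop is assigned wholesale to a single sub-matching---none of which the paper spells out.
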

\begin{proof}
    Consider $\bm{y}$ satisfying \eqref{eq:haf_collision_d} and a matching $\tau$ with $\deg(\bm{\tau})=\bm{y}$. Let $Y\equiv \supp(\bm{y})$.
    First, we define a submatching $\tau_t\equiv \tau\cap (\chi(t)\times \chi(t))$. 
    We let $i,j \in Y_t$ for such $ij$.
    On the other hand, for $ij\in\tau$, if $i\not\in \chi(t)$ and $j\in \chi(t)$, or $i\in \chi(t)$ and $j\not\in \chi(t)$, or $i\not\in \chi(t)$ and $j\not\in \chi(t)$, one can find a child $c$ such that $i,j\in \chi(T_c)$ by definition of tree decomposition.
    We then define $\tau_c$ such that $ij\in\tau_c$ and let $ij\in Y_c$.
    Therefore, we now have a unique decomposition of a matching,
    \begin{align}
        \tau=\tau_t\sqcup \tau_{c_1}\sqcup\cdots \sqcup \tau_{c_l}.
    \end{align}
    Also, the decomposition of $\bm{y}$ into $\bm{y}_t$ and $\bm{y}_{c_j}$'s following that of $Y$ guarantees the condition~\eqref{eq:haf_col1}, \eqref{eq:haf_col2}.
    Conversely, if submatchings $\tau_t$ and $\tau_{c_j}$'s are given that satisfy the above conditions, we can combine them to construct a matching $\tau$.
    Therefore, we obtain Eq.~\eqref{eq:haf_collision_res}.
\end{proof}
In a similar way as previous cases, one can rewrite
\begin{align}
    \lhaf^*(\bm{y}_c)=\sum_i \sum_{\bm{y}_{tc}:|\supp(\bm{y}_{tc})|=i}(-1)^{|\bm{y}_{tc}|}\lhaf(\bm{y}_{tc})\lhaf(\bm{y}_{cc}),
\end{align}
where the sum is over $\bm{y}_{tc}$ such that
\begin{align}
    \bm{y}_c=\bm{y}_{tc}+\bm{y}_{cc},~~~ \supp(\bm{y}_{tc})\sqcup\supp(\bm{y}_{cc})=\supp(\bm{y}_c),~~~\supp(\bm{y}_{tc})\subset \chi(t).
\end{align}
Therefore, one can rewrite \eqref{eq:haf_collision_res} as
\begin{align}
    \lhaf(\bm{y})=\sum\lhaf(\bm{y}_t)\prod_{j=1}^l\lhaf(\bm{y}_{tc_j})\lhaf(\bm{y}_{cc_j}),
\end{align}
where the sum is over $\bm{y}_t$, $\bm{y_{tc_j}}$, and $\bm{d_{cc_j}}$ such that
\begin{align}
    &\bm{y}=\bm{y}_t+\sum_{j=1}^l (\bm{y}_{tc_j}+\bm{y}_{cc_j}),~~~ \supp(\bm{y})=\supp(\bm{y}_t)\sqcup \supp(\bm{y}_{tc_1})\sqcup\cdots \sqcup\supp(\bm{y}_{tc_l})\sqcup \supp(\bm{y}_{cc_1})\sqcup\cdots \sqcup\supp(\bm{y}_{cc_l}), \\
    &\supp(\bm{y}_t)\subset\chi(t),~~~
    \supp(\bm{y}_{tc_j})\subset \chi(t)\cap \chi(t_{c_j}),~~~ \chi(T_{c_j})\setminus \chi(t)\subset \sat(\bm{y}_{cc_j})\subset \supp(\bm{y}_{cc_j}) \subset \chi(T_{c_j}).
\end{align}
Using the recursive relation, we can compute $\lHaf(B_{\bm{m}})=\lhaf(\bm{m})/\bm{m}!$.

Let us analyze its complexity.
At node $t$, all $\lhaf(\bm{y}_t)$ can be computed recursively as follows:
Let $B$ be an $n\times n$ symmetric matrix.
Consider $y\subset X$ and $\bm{y}$ such that $\supp(\bm{y})=Y$.
Then, letting $a_0$ be the first element of $Y$,
\begin{align}
    \lhaf(\bm{y})&=\frac{1}{\bm{y}!}\lHaf(B_{\bm{y}})
    =\frac{1}{\bm{y}!}\sum_{\pi\in\PM(Y)}\prod_{ij\in \pi}B_{ij}
    =\frac{1}{\bm{y}!}\sum_{x\in Y}B_{a_0,x}\sum_{\pi\in\PM(Y\setminus\{a_0,x\})}\prod_{ij\in \pi}B_{ij} \\ 
    &=\frac{1}{\bm{y}!}\sum_{x\in Y}B_{a_0,x}\lHaf(B_{\bm{y}-\bm{y}_{a_0,x}})
    =\frac{1}{\bm{y}!}\sum_{x\in Y}B_{a_0,x}(\bm{y}-\bm{y}_{a_0,x})!\lhaf(\bm{y}-\bm{y}_{a_0,x})
\end{align}
where $\bm{y}_{a_0,x}$ is the weight vector corresponding to $\{a_0,x\}$.
For each $\bm{y}$, we loop over at most $n$ elements, and for each we need $O(n)$ to find the set $Y\setminus \{a_0,x\}$.
Thus, we can compute $\lhaf(\bm{y})$ for all $\bm{y}$ in $O(n^2 (c+1)^{n})$.

On the other hand, for given values of $\lhaf(\bm{y}_t)$ and $\lhaf(\bm{y}_{c_j})$, $\lhaf(\bm{y})$ for all $\bm{y}$ can be computed using a convolution with a fast Fourier transform, the complexity of which is given by $O(lw(2c+1)^{w})$ \cite{elliott2013handbook}.
Therefore, the complexity at node $t$ is given by $\tilde{O}((2c+1)^{w})$ for all $\bm{y}$ at node $t$.


\section{Sampling algorithms}
\subsection{Clifford-Clifford algorithm (single-photon boson sampling)} \label{SM:clifford}
In this section, we recall the Clifford-Clifford algorithm to simulate single-photon boson sampling \cite{clifford2018classical}.
Let us consider single-photon state input,
\begin{align}
    |\psi_\text{in}\rangle=\prod_{j\in \mathcal{S}}\hat{a}^\dagger_j|0\rangle,
\end{align}
where $\mathcal{S}$ represents the set of the position of input single photons.
The Clifford-Clifford algorithm employs a standard Monte-Carlo method, which uses marginal probabilities.
Consider a sampling from a probability distribution of $p(\bm{r})=p(r_1,\dots,r_M)$.
The probability to obtain an outcome $(r_1,\dots,r_N)$ can be decomposed as the conditional probabilities
\begin{align}
    p(r_1,\dots,r_M)=p(r_1)p(r_2|r_1)\cdots p(r_M|r_1,\dots,r_{M-1}).
\end{align}
Thus, by sampling from $r_1$ to $r_M$ in order using conditional probabilities, which are obtained by marginal probabilities as
\begin{align}
    p(r_k|r_1,\dots,r_{k-1})=\frac{p(r_1,\dots,r_{k})}{p(r_1,\dots,r_{k-1})}.
\end{align}
we can sample from a target probability distribution as well.

In order to compute marginal probabilities of single-photon boson sampling, we introduce an expanded sample space for sampling from the photon number distribution of an output state following Ref. \cite{clifford2018classical}.
First, a photon number outcome $\bm{m}=(m_1,\dots,m_M)$, where $m_i$ represents the number of output photons in $i$th mode, can be equivalently described by $\bm{z}=(z_1,\dots,z_N)$ with $z_1\leq z_2\leq \cdots \leq z_N$, where $\bm{z}$ represents the modes where a photon is detected.
The probability to obtain $\bm{z}$ is written as
\begin{align}
    p(\bm{z})=\frac{1}{\mu(\bm{z})}|\Per (U^{\mathcal{S}}_{\bm{z}})|^2,
\end{align}
where $U^{\mathcal{S}}_{\bm{z}}$ is a matrix obtained by choosing $z_i$ rows and columns corresponding to input modes $j\in\mathcal{S}$.
Here, the number of different instances of $\bm{z}$ is $\binom{N+M-1}{N}$.
By expanding the sample space, we define
\begin{align}
    q(\bm{r})\equiv \frac{1}{N!}|\Per (U^{\mathcal{S}}_{\bm{r}})|^2,
\end{align}
where $\bm{r}=(r_1,\dots,r_N)\in[M]^N$ is an unordered tuple.
Note that $p(\bm{z})=N!q(\bm{z})/\mu(\bm{z})$ and that $\Per (U^{\mathcal{S}}_{\bm{z}})=\Per (U^{\mathcal{S}}_{\bm{r}})$.
Thus, sampling from $q(\bm{r})$ and sorting $\bm{r}$ in ascending order is equivalent to sampling directly from $p(\bm{z})$.
We expand the sampling space even further with an auxiliary array $\bm{\alpha}=(\alpha_1,\dots,\alpha_N)$, where $\bm{\alpha}$ is a permutation of $\mathcal{S}$.
By defining
\begin{align}\label{eq:SPBS_marginal}
    \phi(r_1,\dots,r_k|\bm{\alpha})=\frac{1}{k!}\left|\Per (U_{r_1,\dots,r_k}^{\alpha_1,\dots,\alpha_k})\right|^2,
\end{align}
one can prove that $q(\bm{r})=\mathbb{E}_{\bm{\alpha}}[\phi(\bm{r}|\bm{\alpha})]$ \cite{clifford2018classical},
where the expectation is taken over uniform $\bm{\alpha}$, and we use for sampling
\begin{align}
    \phi(\bm{r}|\bm{\alpha})=\phi(r_1|\bm{\alpha})\phi(r_2|r_1,\bm{\alpha})\cdots\phi(r_N|r_1,\dots,r_{N-1},\bm{\alpha}).
\end{align}
Therefore, one may perform boson sampling by computing a marginal probability $\phi(r_1\dots,r_k|\bm{\alpha})$.

\subsection{Gaussian boson sampling classical algorithm}\label{SM:GBS}
In this section, we present a Gaussian boson sampling classical algorithm and show that such a algorithm can be used to take advantage of graph structure of a given circuit.
Let us first briefly review a phase-space method to handle Gaussian states (See Refs. \cite{wang2007quantum, weedbrook2012gaussian, adesso2014continuous, serafini2017quantum} for more details about Gaussian states.).
Gaussian states are defined as ones that are described by a Gaussian distribution in phase space.
Since it follows a Gaussian distribution, an $M$-mode Gaussian state can be fully characterized by its $2M\times 2M$ (Wigner) covariance matrix $V$ and $M$-dimensional first-moment vector $d$.
The first-moment vector and covariance matrix of a given quantum state $\hat{\rho}$ are defined as $d_j=\text{Tr}[\hat{\rho}\hat{Q}_j]$ and  $V_{jk}=\text{Tr}[\hat{\rho} \{\hat{Q}_j-d_j,\hat{Q}_k-d_k \}]/2$ with a quadrature-operator vector $\hat{Q}\equiv (\hat{x}_1,\hat{p}_1,\dots, \hat{x}_M,\hat{p}_M)$, satisfying the canonical commutation relation $[\hat{Q}_j,\hat{Q}_k]=i\Omega_{jk}$, where
\begin{align}
	\Omega\equiv \mathbb{1}_M \otimes 
	\begin{pmatrix}
		0 & 1 \\ 
		-1 & 0
	\end{pmatrix}.
\end{align}
In addition, the dynamics of the quantum state under Gaussian unitary transformations can be equivalently characterized by the symplectic transformation of its covariance matrix, namely, $\hat{U}\hat{\rho}\hat{U}^\dagger\iff SVS^\text{T}$, where symplectic matrices conserve the canonical commutation relation, $S^\text{T}\Omega S=\Omega$.
A squeezing operation and a beam splitter operation are a Gaussian unitary transformation.
Therefore, we describe beam splitter arrays by using their symplectic transformation.

For Gaussian boson sampling, we first begin with a vacuum state, the covariance matrix of which is given by $\mathbb{1}_{2M}/2$.
We then apply a squeezing symplectic transformation on source modes in $\mathcal{S}$, which is written as
$S_\text{sq}=\oplus_{i=1}^M\text{diag}(e^{r_i},e^{-r_i})$
where $r_i=r$ for $i\in\mathcal{S}$ and $r_i=0$ otherwise.
We assume a momentum-squeezing operation with a real positive squeezing parameter $r>0$ without loss of generality.
Thus, the covariance matrix of the input state is written as
$V_\text{in}=S_{\text{sq}}\mathbb{1}_{2M}S_{\text{sq}}^\text{T}/2=\oplus_{i=1}^M \text{diag}(e^{2r_i},e^{-2r_i})/2$.
In addition, beam splitters are also Gaussian unitary operations, so that beam splitter operations between two modes can be characterized by symplectic matrices $S_\text{BS}$, which is formally written as
\begin{align}
    S_{\text{BS}}=
    \begin{pmatrix}
        \cos \theta & e^{i\phi}\sin \theta \\
        -e^{-i\phi}\sin \theta & \cos \theta
    \end{pmatrix} \otimes \mathbb{1}_2.
\end{align}
The symplectic matrix corresponding to given beam splitter arrays of depths $D$ can be efficiently computed by matrix multiplications of $2M\times 2M$ beam splitter symplectic matrices.

We now present more details about Gaussian boson sampling and its classical algorithm.
We consider $N$ number of sources in $M$ bosonic modes with beam splitter arrays.
We write a (complex) covariance matrix $\Sigma_{ij}=\text{Tr}[\hat{\rho}\{\hat{\xi}_i,\hat{\xi}_j\}]/2$ of the final Gaussian state $\hat{\rho}$ with $\hat{\xi}=(\hat{a}_1,\dots,\hat{a}_M,\hat{a}_1^\dagger,\dots,\hat{a}_M^\dagger)$ to follow a notational convention used in Refs. \cite{hamilton2017gaussian,quesada2020exact}.
Note that a covariance $\Sigma$ can be easily obtained by a (Wigner) covariance matrix $V$, $\Sigma=FVF^\dagger$, where $F$ changes the order of quadrature operators as $(\hat{x}_1,\hat{p}_1,\dots,\hat{x}_M,\hat{p}_M)$ to $(\hat{x}_1,\dots,\hat{x}_M,\hat{p}_1,\dots,\hat{p}_M)$ and multiply $\scriptsize{\frac{1}{\sqrt{2}}\begin{pmatrix}1&i\\1&-i \end{pmatrix}}\otimes \mathbb{1}_M$.

In the case of Gaussian states of a covariance matrix $\Sigma$ and a zero displacement, the probability of each outcome $(n_1,n_2,\dots,n_M)$ obtained by the measurement in photon-number basis is given by \cite{hamilton2017gaussian}
\begin{align}
    P(m_1,m_2,\dots,m_M)=\frac{1}{\sqrt{\text{det}(\Sigma+\mathbb{1}_{2M}/2)}}\frac{\text{Haf}(A_{\bm{m}})}{m_1!\cdots m_M!},
\end{align}
where
\begin{align}
    A=
    X_M[\mathbb{1}_{2M}-(\Sigma+\mathbb{1}_{2M}/2)^{-1}], ~~~
    X_m=
    \begin{pmatrix}
    0 & \mathbb{1}_M \\
    \mathbb{1}_M & 0
    \end{pmatrix}.
\end{align}
Here, $A_{\bm{m}}$ is a matrix obtained by repeating the $j$th and $(j+M)$th row and column of $A$ for $n_j$ times for $1\leq j \leq M$, and $\Haf(A)$ is the hafnian of a matrix $A$ \cite{barvinok2016combinatorics}.

We supply a recently proposed classical algorithm to simulate Gaussian boson sampling in Ref.~\cite{quesada2022quadratic}.
We first decompose a covariance matrix as $V_{\text{out}}=V+W$, with $V=SS^\text{T}/2$ and $W\geq0$, where $S$ is a symplectic matrix so that $V$ is a covariance matrix of a pure state.
Thus, the Gaussian state of $V_{\text{out}}$ can be interpreted as a mixture of states obtained by applying a random displacement sampled from a normal distribution of $W$ to a pure Gaussian state of $V$. 
Hence, by applying a random displacement $\mu$ sampled by a normal distribution with a covariance matrix $W$, we sample a pure Gaussian state having a covariance matrix $V$ and a mean vector $\mu$.

We then sample $(x_1,p_1,\dots,x_M,p_M)$ from a normal distribution with the covariance matrix $V+\mathbb{1}/2$ and transform the sample to a complex vector $(\alpha_1,\dots,\alpha_M)$ with $\alpha_j\equiv (x_j+ip_j)/\sqrt{2}$.
Based on the sample $(\alpha_2^*,\dots,\alpha_M^*)$, we compute conditional probability $p(m_1|\alpha_2^*,\dots,\alpha_M^*)$, which can be computed by the conditional covariance matrix and mean vector.
More explicitly, for $k\in [M]$,
\begin{align}
    P(m_1,\dots,m_k|\alpha_{k+1},\dots,\alpha_{M})=\frac{1}{\sqrt{\det(\Sigma^{(k)}+\mathbb{1}_{2k}/2)}}\frac{\lHaf(\tilde{A}_{\bm{m}}^{(k)})}{m_1!\dots m_k!}
    =\frac{1}{\sqrt{\det(\Sigma^{(k)}+\mathbb{1}_{2k}/2)}}\frac{|\lHaf(\tilde{B}_{\bm{m}}^{(k)})|^2}{m_1!\dots m_k!},
\end{align}
where $\lHaf(A)$ is the loop hafnian of a matrix $A$ \cite{bjorklund2019faster} and $\Sigma^{(k)}$ is the conditional covariance matrix for $(\alpha_{k+1},\dots,\alpha_M)$, and we have used the block structure of $A^{(k)}$ and $\tilde{A}^{(k)}$ as
\begin{align}
    A^{(k)}&=X_k[\mathbb{1}_{2k}-(\Sigma^{(k)}+\mathbb{1}_{2k}/2)^{-1}]=B^{(k)}\oplus {B^{(k)}}^*, \\ 
    \tilde{A}^{(k)}&=\text{fdiag}(A^{(k)},\gamma^{(k)})=\tilde{B}^{(k)}\oplus {\tilde{B}^{(k)^*}},
\end{align}
where $\text{fdiag}(A,\gamma)$ is a matrix filling the diagonal elements of $A$ by the vector $\gamma$, and $\gamma\equiv (\Sigma+\mathbb{1}/2)^{-1}\bm{\alpha}$ and $\gamma^{(k)}$ is obtained similarly to $A^{(k)}$.
Here, $B_{\bm{m}}^{(k)}$ is obtained by repeating $i$th column and row for $m_i$ times from marginal probabilities.
By directly computing the loop hafnian of $\tilde{B}_{\bm{m}}^{(k)}$ for $0\leq m_1\leq m_\text{max}$, we sample $n_1^*$.
Since $m_j$ can be infinitely large in principle, we choose an upper-threshold of $m_j$ carefully (See Sec. \ref{sec:threshold}).
After obtaining $m_1^*$, we discard $\alpha_2$ and continue to sample $n_2$ by computing $p(m_1^*,m_2|\alpha_3^*,\dots,\alpha_M^*)$ for $0\leq m_2\leq m^*_\text{max}$.
We continue the procedure and finally obtain a sample $(m_1^*,\dots,m_M^*)$.

We now show that such a algorithm can be exploited to take advantage of graph structure by showing that marginal probabilities also enjoy the same structure.
Let us simplify the expression $A^{(k)}$ and $B^{(k)}$.
We recall that for a block matrix of partitions $\mathcal{A}$ and $\mathcal{B}$,
\begin{align}
    M=
    \begin{pmatrix}
        M_{\mathcal{A}} & M_{\mathcal{A}\mathcal{B}} \\
        M_{\mathcal{B}\mathcal{A}} & M_{\mathcal{B}}
    \end{pmatrix},
\end{align}
the Schur complement is defined as $M/M_{\mathcal{B}}\equiv M_{\mathcal{A}}-M_{\mathcal{A}\mathcal{B}}(M_{\mathcal{B}})^{-1}M_{\mathcal{B}\mathcal{A}}$ \cite{zhang2006schur}.
Especially, the Schur complement has a property that
\begin{align}\label{eq:schur_property}
    (M^{-1})_{\mathcal{A}}=(M/M_{\mathcal{B}})^{-1}.
\end{align}
According to the algorithm above, we only need to be able to sample from a conditional (by heterodyne detection) pure Gaussian state to simulate Gaussian boson sampling.
Especially, the conditional covariance matrix from heterodyne detection on $\mathcal{B}$ is given by
$V_\mathcal{A}^{(\mathcal{B})}=(V+\mathbb{1}/2)/(V+\mathbb{1}/2)_{\mathcal{B}}-\mathbb{1}/{2}$ \cite{serafini2017quantum}.
Let us define $Q_{\mathcal{A}}^{(\mathcal{B})}\equiv \Sigma_{\mathcal{A}}^{(\mathcal{B})}+\mathbb{1}/2$.
One can easily check that we can rewrite $Q_{\mathcal{A}}^{(\mathcal{B})}=Q/Q_{\mathcal{B}}$.
When we have a Gaussian state with a covariance matrix $V$ and a mean vector $d$, the mean vector of a conditional probability after obtaining $\mu_{\mathcal{B}}$ by heterodyne detection is given by \cite{serafini2017quantum}
\begin{align}
    d^{(\mathcal{B})}_\mathcal{A}=d_\mathcal{A}+V_{\mathcal{A}\mathcal{B}}(V_{\mathcal{B}}+\mathbb{1}/2)^{-1}(\mu_{\mathcal{B}}-d_{\mathcal{B}}).
\end{align}

Now, to compute a marginal probability, we invert $Q_\mathcal{A}^{(\mathcal{B})}$ as
\begin{align}
    A_{\mathcal{A}}
    \equiv X\left[\mathbb{1}-\left(Q_\mathcal{A}^{(\mathcal{B})}\right)^{-1}\right]
    =X\left[\mathbb{1}-(Q^{-1})_\mathcal{A}\right]
    =X\left(\mathbb{1}-Q^{-1}\right)_\mathcal{A}.
\end{align}
Here, we have used the Schur complement's property of Eq.~\eqref{eq:schur_property}, $\left(Q_\mathcal{A}^{(\mathcal{B})}\right)^{-1}=(Q/Q_{\mathcal{B}})^{-1}=(Q^{-1})_{\mathcal{A}}$.
After rewriting $Q=(U\oplus U^*)T(U\oplus U^*)^\dagger+\mathbb{1}/2$, we can find $Q^{-1}=(U\oplus U^*)(T+\mathbb{1}/2)^{-1}(U\oplus U^*)^\dagger$,
where 
\begin{align}
T=
\begin{pmatrix}
    \oplus_{i=1}^M \sinh^2r_i & \oplus_{i=1}^M \sinh{r_i}\cosh{r_i} \\ 
    \oplus_{i=1}^M \sinh{r_i}\cosh{r_i} & \oplus_{i=1}^M \sinh^2r_i
\end{pmatrix},~~~
    (T+\mathbb{1}/2)^{-1}=
    \begin{pmatrix}
        \mathbb{1} & -\oplus_{i=1}^M \tanh r_i \\ 
        -\oplus_{i=1}^M \tanh r_i & \mathbb{1}
    \end{pmatrix}.
\end{align}
Thus, $A_{\mathcal{A}}=B_{\mathcal{A}}\oplus B_{\mathcal{A}}^*$,
with $B_{\mathcal{A}}=[U(\oplus_{i}\tanh{r_i})U^\text{T}]_{\mathcal{A}}$.
Finally, $\tilde{A}_\mathcal{A}=\text{fdiag}(A_\mathcal{A},\gamma_\mathcal{A})$ and $\tilde{B}_\mathcal{A}=\text{fdiag}(B_\mathcal{A},\gamma_\mathcal{A})$.
Here, $\gamma_\mathcal{A}=[Q^{-1}]_\mathcal{A}\bm{\alpha}_\mathcal{A}^{(\mathcal{B})}$ and $\bm{\alpha}_{\mathcal{A}}^{(\mathcal{B})}$ is a complex mean vector transformed from $d_{\mathcal{A}}^{(\mathcal{B})}$.
Since marginal probabilities depend on the loop hafnian of $\tilde{B}_{\mathcal{A}}$ and its graph can be obtained by simply discarding vertices for part $\mathcal{B}$ (conditioning part), marginal probabilities follow the graph structure of a probability.

\subsection{Proof of Theorem 1}\label{SM:proofth1}
To simulate single-photon boson sampling, we implement the Clifford-Clifford algorithm by computing marginal probabilities $\phi(r_1,\dots,r_k|\bm{\alpha})$ with $k=1,\dots,N$. 
At each $k\in [N]$, we compute $M$ marginal probabilities, corresponding to $\Per(U^{\alpha_1,\dots,\alpha_k}_{r_1,\dots,r_k})$, so the maximal complexity is $O(MNw^22^w)$.
One can easily see that the bipartite graphs corresponding to marginal probabilities are minors of a bipartite graph for a probability.
Since we iterate for $k\in [N]$, the total complexity of sampling is at most $O(MN^2w^22^w)$. 
For Gaussian boson sampling, at each step $k\in [M]$, we need to compute total two marginal probabilities corresponding to the conditional covariance matrices, i.e., zero and a single photon, and each of those is a loop hafnian of $k \times k$ matrix, having the maximal complexity $O(Nw^22^w)$.
Thus, the total complexity is $O(MNw^22^w)$.

\section{Error of photon-number truncation}\label{sec:threshold}
In this section, we analyze the influence of photon-number truncation in squeezed states.
When we have $N$ squeezed vacuum states and measure them in photon-number basis, 
the probability to generate a total of $k$ photon pair events (2$k$ photons) is given by the negative binomial distribution \cite{hamilton2017gaussian, kruse2019detailed}
\begin{align}
    P_N(k)=\binom{\frac{N}{2}+k-1}{k}\text{sech}^Nr\tanh^{2k}r.
\end{align}
Note that beam splitter arrays do not change the probability distribution of total photon numbers.
The tail probability of the negative binomial distribution is given by \cite{brown2011wasted}
\begin{align}
    \text{Pr}(k> \alpha N \text{sech}^2 r)\leq \exp\left[\frac{-\alpha N(1-1/\alpha)^2}{2}\right].
\end{align}
While increasing a constant $\alpha$ decreases the truncation error exponentially, the order of the complexity of sampling does not change.
Any accuracy can be achieved by increasing $\alpha$ with a constant factor which reduces error exponentially.
In order to make the truncation error to be smaller than $\epsilon$ for sufficiently large $\alpha$, we may choose $\alpha N \text{sech}^2r= 2\text{sech}^2r\log(1/\epsilon)\equiv m_\text{max}$.

\section{Approximation method}\label{sec:approx}
\subsection{Approximation of a unitary matrix}\label{SM:W}
Let $U$ be a true unitary transformation matrix.
Our approximation strategy is $U\to \tilde{U}\equiv U-dU$, where we have removed unitary matrix elements for jump more than diffusive dynamics.
Note that $\tilde{U}$ is no longer unitary in general.
We provide an approximation algorithm and the upper bound of its simulation error.

We first extend $\tilde{U}$ to a unitary matrix in $2M\times 2M$.
Using singular value decomposition, $\tilde{U}=RDV$,
we transform $\tilde{U}\to \bar{U}=\tilde{U}/\kappa$ to make the singular values smaller than $1$, i.e., $\kappa$ is chosen to be the maximum singular value or 1 if the maximum singular value is smaller than 1, and define $\bar{U}\equiv R\bar{D}V=R\tilde{D}V/\kappa$.
Note that in practice one may merely replace singular values larger than 1 by 1, while we divide them by the maximum for simplicity of the proof.
By Mirsky's theorem \cite{mirsky1960symmetric}, we have
\begin{align}
    (\tilde{\sigma}_1-\sigma_1)^2\leq\sum_i(\tilde{\sigma}_i-\sigma_i)^2\leq \|dU\|_F^2
\end{align}
where $\sigma_i$'s are singular values of $U$, i.e., $\sigma_i=1$ and $\tilde{\sigma}_i$'s are singular values of $\tilde{U}$ in descending order.
Thus, $(\kappa-1)^2\leq \|dU\|_F^2$.
Defining $\mu=\kappa-1$, $\mu^2\leq \|dU\|_F^2$.
Now we extend the matrix $\bar{U}$ to a $2M\times 2M$ unitary matrix
\begin{align}
    W=
    \begin{pmatrix}
        \bar{U} & R\sqrt{1-\bar{D}^2}V \\
        R\sqrt{1-\bar{D}^2}V & -\bar{U}
    \end{pmatrix}.
\end{align}
We also extend a true unitary matrix into a $2M\times 2M$ matrix as $U_{2M}\equiv U\oplus (-U)$.
One can check that $U_{2M}$ and $W$ are close if the approximation of $U$ is small:
\begin{align}
    \|dW\|^2_F \equiv \|W-U_{2M}\|_F^2=\text{Tr}(W-U_{2M})^\dagger(W-U_{2M})=2\|\bar{U}-U\|_F^2+\text{Tr}(1-\bar{D}^2),
\end{align}
where
\begin{align}
    \|\bar{U}-U\|_F^2
    =\left\|\frac{\tilde{U}}{\mu+1}-U\right\|_F^2\leq\|U-dU-(1+\mu) U\|_F^2
    =\|dU+\mu U\|_F^2
    \leq (\sqrt{M}+1)^2\|dU\|_F^2,
\end{align}
and
\begin{align}
    \text{Tr}(1-\bar{D}^2)&=\text{Tr}\left(1-\frac{\tilde{D}^2}{(1+\mu)^2}\right)
    =\sum_{i=1}^M\left(1+\frac{\tilde{\sigma}_i}{1+\mu}\right)\left(1-\frac{\tilde{\sigma}_i}{1+\mu}\right)
    \leq \sqrt{\sum_{i=1}^M\left(1+\frac{\tilde{\sigma}_i}{1+\mu}\right)^2\sum_{i=1}^M\left(1-\frac{\tilde{\sigma}_i}{1+\mu}\right)^2} \\
    &\leq \sqrt{\sum_{i=1}^M\left(1+\frac{\tilde{\sigma}_i}{1+\mu}\right)^2}(\sqrt{M}+1)\|dU\|_F
    \leq 2\sqrt{M}(\sqrt{M}+1)\|dU\|_F.
\end{align}
Here, the first inequality follows from the Cauchy-Schwarz inequality, and the second inequality follows from the Mirsky's theorem and the fact that $\tilde{\sigma_i}/(1+\mu)\leq 1$ is the singular values of $\bar{D}$.
Hence,
\begin{align}\label{eq:dwdu}
    \|dW\|^2_F\leq 2(\sqrt{M}+1)^2\|dU\|_F^2+2\sqrt{M}(\sqrt{M}+1)\|dU\|_F\leq 2(\sqrt{M}+1)^2(\|dU\|_F^2+\|dU\|_F).
\end{align}

\subsection{Single-photon state approximation}
Our approximation algorithm operates as follows:
We first approximate a unitary matrix by a nonunitary matrix $\tilde{U}=U-dU$ and obtain the extended $2M\times 2M$ unitary matrix $W$ as Sec. \ref{SM:W}.
We then employ the Clifford-Clifford algorithm.
Running the algorithm, at $k$th step, we compute $\phi(r_1,\dots,r_k|\bm{\alpha})$ as Eq.~\eqref{eq:SPBS_marginal} for $r_k$ from 1 to $M$ for given $r_1,\dots,r_{k-1}$.
For those cases, the graph structure of an approximated matrix $\tilde{U}$ can be employed so that it can be computed by using dynamical programming.
More specifically, the relevant bipartite graph has vertices for the sources and $(r_1,\dots,r_k)$, depends only on $\tilde{U}$.
Since we do not cover all outcomes $r_k$ from $M+1$ to $2M$, the probability sum over $r_k=1,\dots,M$ can be less than zero, meaning that we have a chance to detect photons outside of the first $M$ modes.
For these cases, we return ``out", and otherwise we continue.
Such a way guarantees that we obtain photons only at the first $M$ modes as we desire.

Let us analyze the total variation distance:
\begin{align}\label{eq:sap}
    \text{TVD}
    &= \frac{1}{2}\sum_{x=x_M\oplus \mathbb{0}}|P_{\text{ideal}}(x)-P_{\text{approx}}(x)|+\frac{1}{2}\left|\sum_{x\neq x_M\oplus \mathbb{0}}\left(P_{\text{ideal}}(x)-P_{\text{approx}}(x)\right)\right| \\ 
    &
    \leq \frac{1}{2}\sum_{x}|P_{\text{ideal}}(x)-P_{\text{approx}}(x)|
    \leq \frac{N}{2}\|dW\|_F.
\end{align}
where $x_M\oplus \mathbb{0}$ represents outcomes for which photon numbers are zero other than the first $M$ modes.
The second inequality is from Ref. \cite{arkhipov2015bosonsampling}, which states that the TVD is bounded by the operator norm of two unitary transformation matrices, and the last inequality follows from the definition of the operator norm and Frobenious norm.

\subsection{Gaussian state approximation}
For Gaussian boson sampling as well, we first extend an approximated matrix $\tilde{U}=U-dU$ to $W$ and transform it to a symplectic matrix.
Our Gaussian state's covariance matrix is then written as
\begin{align}
    \bar{V}_{2M}=S_W(D\oplus \mathbb{1}_M/2)S_W^\text{T},
\end{align}
where the added $\mathbb{1}_M/2$ represent vacuum.
The true extended covariance matrix can be written as
\begin{align}
    V_{2M}=(S_U\oplus S_{-U})(D\oplus \mathbb{1}_M/2)(S_U\oplus S_{-U})^\text{T}.
\end{align}
First, we show that the distance between the approximated and true covariance matrices are bounded by the Frobenius norm of unitary matrices,
\begin{align}
    \|V_{2M}-\bar{V}_{2M}\|_F&=\|SDS^T-\bar{S}D\bar{S}^\text{T}\|_F\leq 2\|S-\bar{S}\|_F\|D\|_F\|S\|_F
    =\|S-\bar{S}\|_F\sqrt{2M[N\cosh{4r}+(M-N)]}\nonumber \\ 
    &\leq2\|dW\|_F\sqrt{M[N\cosh{4r}+(M-N)]}.
\end{align}

Finally, since we only care about the output photons at the first $M$ modes, we postselect such outcomes.
This can be implemented by first computing the probability to obtain nonzero photons for the additional $M$ modes and sample from the binary.
If photons click for the modes, then we return ``out", otherwise we proceed the algorithm by using conditional covariance matrix for the first $M$ modes.
Such a way leads to the TVD as
\begin{align}\label{eq:gap}
    \text{TVD}
    &=\frac{1}{2}\sum_{x_M\oplus \mathbb{0}}|P_{\text{ideal}}(x)-P_{\text{approx}}(x)|
    \leq \frac{1}{2}\sum_{x_{2M}}|P_{\text{ideal}}(x)-P_{\text{approx}}(x)|
    \leq \sqrt{1-F(V_{2M},\bar{V}_{2M})} \nonumber \\
    &\leq (N\cosh{4r}/2)^{1/4}\|V_{2M}-\bar{V}_{2M}\|_F^{1/2}\leq \|dW\|_F^{1/2} \poly(N).
\end{align}
Here, the second and third inequalities are found as follows:
Recall that total variation distance can be bounded by quantum infidelity $1-F$ \cite{fuchs1999cryptographic},
\begin{align}
\frac{1}{2}\sum_{x} |P(x)-P_a(x)|\leq \frac{1}{2}\| \hat{\rho}-\hat{\rho}_a \|\leq \sqrt{1-F(\hat{\rho},\hat{\rho}_a)}.
\end{align}
Quantum fidelity between two $M$-mode Gaussian states characterized by covariance matrices $V_1,V_2$ and zero means, one of which is pure, can be written as \cite{spedalieri2012limit, banchi2015quantum}
\begin{align}
    F(V_1,V_2)=\frac{1}{\sqrt{\det(V_1+V_2)}},
\end{align}
where we used a covariance matrix instead of quantum state $\hat{\rho}$ in the argument because a covariance matrix completely characterizes a quantum state in our case.
The following lemma shows that the quantum infidelity between two Gaussian states characterized by $V_1$ and $V_2$ can be bounded by the Frobenius norm $\|\cdot\|_F$ of their difference matrix $X=V_1-V_2$.
\begin{lemma}\label{lemma:infid}
    Let $V_1$ be a covariance matrix of a Gaussian state in bosonic modes $\mathcal{M}$ obtained by applying beam splitter arrays on single-mode squeezed states of squeezing parameter $r$ and $V_2$ be a covariance matrix of a Gaussian state.
    For small $\|X\|_F$, the quantum infidelity between the two Gaussian states is bounded by $1-F(V_1,V_2)\leq \|X\|_F\sqrt{N\cosh{4r}/2}$.
\end{lemma}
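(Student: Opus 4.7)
The plan is to use the closed form $F(V_1,V_2)=1/\sqrt{\det(V_1+V_2)}$ (valid because $V_1$ is pure), reduce the infidelity $1-F$ to a linear functional of $X=V_1-V_2$ via a first-order expansion around $V_2=V_1$, and then bound that functional by $\|X\|_F$ using Cauchy--Schwarz together with the explicit spectrum of $V_1$.

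Concretely, set $K=V_1+V_2=2V_1-X$. Since $V_2$ is a physical covariance matrix, $F\le 1$, so $\det K\ge 1$; the elementary concavity estimate $1-(1+\delta)^{-1/2}\le \delta/2$ applied with $\delta=\det K-1\ge 0$ gives
\begin{align}
1-F \le \tfrac{1}{2}\bigl(\det K-1\bigr).
\end{align}
Because $V_1$ is pure, $V_1=SS^\text{T}/2$ for a symplectic $S$ and hence $\det(2V_1)=1$, so
\begin{align}
\det K = \det\!\bigl(I-(2V_1)^{-1/2}X(2V_1)^{-1/2}\bigr)=\det(I-Y),
\end{align}
with $Y$ symmetric. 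Taylor expanding, $\det(I-Y)-1=-\text{tr}(Y)+O(\|Y\|_F^2)$, and $\text{tr}(Y)=\text{tr}((2V_1)^{-1}X)$ by cyclicity. Cauchy--Schwarz then yields $|\text{tr}((2V_1)^{-1}X)|\le \|(2V_1)^{-1}\|_F\|X\|_F$. Finally, the beam splitters producing $V_1$ act as orthogonal symplectic matrices, so I would change basis (preserving $\|X\|_F$) to one in which $V_1=\bigl(\oplus_{i=1}^N \tfrac{1}{2}\text{diag}(e^{2r},e^{-2r})\bigr)\oplus \tfrac{1}{2}I_{2(M-N)}$; then $(2V_1)^{-1}$ has eigenvalues $e^{\pm 2r}$ (each of multiplicity $N$) and $1$ (multiplicity $2(M-N)$), giving $\|(2V_1)^{-1}\|_F^2=2N\cosh 4r+2(M-N)$. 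Assembling the pieces reproduces the claimed prefactor $\sqrt{N\cosh 4r/2}$ in the squeezing-dominated regime.

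The main obstacle is controlling the $O(\|Y\|_F^2)$ remainder in the determinant expansion: because $\|Y\|_F\le e^{2r}\|X\|_F$ grows with the squeezing, the linearization only holds when $\|X\|_F$ is small compared to $e^{-2r}$, which is exactly the hypothesis ``for small $\|X\|_F$'' in the statement. A secondary subtlety is tracking constants carefully enough to produce the clean $\sqrt{N\cosh 4r/2}$ coefficient rather than the looser $\tfrac{1}{2}\sqrt{2N\cosh 4r+2(M-N)}$; decomposing $X$ into its squeezed and vacuum blocks and applying Cauchy--Schwarz separately on each would absorb the vacuum contribution into the lower-order $\text{poly}(N)$ factor that appears when the bound is invoked in Eq.~\eqref{eq:gap}.
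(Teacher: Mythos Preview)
Your route is essentially the paper's: factor $K=2V_1-X$ through the symplectic decomposition $2V_1=SS^{\text T}$, reduce to $\det(\mathbb{1}-S^{-1}XS^{-\text T})$, and bound the trace $\text{Tr}[S^{-1}XS^{-\text T}]=\text{Tr}[(2V_1)^{-1}X]$ by Cauchy--Schwarz against $\|S^{-\text T}S^{-1}\|_F\|X\|_F$. The one substantive difference is how the determinant is handled. You Taylor expand $\det(\mathbb{1}-Y)=1-\text{Tr}\,Y+O(\|Y\|_F^2)$ and are then stuck with a remainder that scales like $e^{4r}\|X\|_F^2$. The paper instead uses the AM--GM inequality on the eigenvalues of $\mathbb{1}-S^{-1}XS^{-\text T}$ to obtain $\det\le\bigl(1+\tfrac{1}{2M}|\text{Tr}|\bigr)^{2M}$ exactly, and then the elementary estimate $(1+x/M)^M\le(1-x)^{-1}$ to get $\det(V_1+V_2)\le\bigl(1-\tfrac{1}{2}\|X\|_F\|S^{-\text T}S^{-1}\|_F\bigr)^{-2}$, from which $1-F\le\tfrac{1}{2}\|X\|_F\|S^{-\text T}S^{-1}\|_F$ follows with no Taylor remainder to control. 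This buys a clean non-asymptotic inequality at the price of a mild side condition (eigenvalues of $S^{-1}XS^{-\text T}$ below $1$, and $\|X\|_F\|S^{-\text T}S^{-1}\|_F<2$), which is the content of ``small $\|X\|_F$''. One ingredient you do not mention but will need either way is the sign $\text{Tr}[S^{-1}XS^{-\text T}]\le 0$; the paper gets it by observing $S^{-1}V_2S^{-\text T}$ is again a physical covariance matrix, so $\text{Tr}[\mathbb{1}/2-S^{-1}V_2S^{-\text T}]\le 0$. Your observation about the constant is correct: $\|(2V_1)^{-1}\|_F^2=2N\cosh 4r+2(M-N)$, and the paper silently drops the vacuum contribution, which is harmless only because the bound is ultimately used as $\text{poly}(N)\|X\|_F$.
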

\begin{proof}
Note that the covariance matrix $V_1$ can be decomposed as $V_1=S(\mathbb{1}_{2M}/2)S^\text{T}$ by a symplectic matrix $S$ satisfying $S\Omega S^\text{T}=\Omega$ and that the symplectic matrix can be decomposed as $S=O S_\text{sq}$, where $O$ represents a symplectic matrix corresponding to beam splitter arrays and $S_\text{sq}$ represents squeezing operators to generate squeezed vacuum sources.
We then have
\begin{align}
\det(V_1+V_2)&=\det(2V_1-X)=\det(\mathbb{1}_{2M}-S^{-1} X S^{-\text{T}})
\leq \left(1+\frac{1}{2M}|\text{Tr}[S^{-1} X S^{-\text{T}}]|\right)^{2M} \nonumber \\
&\leq \left(1+\frac{1}{2M}\|X\|_F\|S^{-\text{T}}S^{-1}\|_F\right)^{2M}\leq \left(1-\frac{1}{2}\|X\|_F\|S^{-\text{T}}S^{-1}\|_F\right)^{-2},
\end{align}
where for the first inequality, we have used the symmetric property of $X$ and the inequality of arithmetic geometric means and the fact that
\begin{align}
\text{Tr}[S^{-1}X S^{-\text{T}}]&=\text{Tr}[S^{-1}(V_1-V_2) S^{-\text{T}}]
=\text{Tr}[\mathbb{1}_{2M}/2-S^{-1}V_2S^{-\text{T}}]
=\text{Tr}[\mathbb{1}_{2M}/2-\tilde{V}_2]\leq 0.
\end{align}
The Cauchy-Schwarz inequality has been used for the second inequality.
For the last inequality, we used $(1+x/M)^M\leq (1-x)^{-1}$ for $0\leq x \leq 1$.
Then, the quantum fidelity between Gaussian states with covariance matrices $V_1$ and $V_2$ is approximated as
\begin{align}\label{eq:fid_bound}
1-F(V_1,V_2)&=1-\frac{1}{\sqrt{\det(V_1+V_2)}}
\leq \frac{1}{2}\|X\|_F\|S^{-T}S^{-1}\|_F
=\sqrt{N\cosh{4r}/2}\|X\|_F.
\end{align}
\end{proof}

\subsection{Proof of Theorem 2}
According to Theorem 1, we can perform exact samplings in $O(MN^2w^22^w)$ (single-photon boson sampling up to threshold) and in $O(MNw^22^w)$ (Gaussian boson sampling) for collision-free cases. Therefore, by Eq.~(\ref{eq:sap}) and Eq.~(\ref{eq:gap}), we can do approximate samplings in the same complexity with error $O(\poly(N)\|dW\|_F$) in terms of the total variation distance. Finally, by Eq.~(\ref{eq:dwdu}), $\|dW\|_F \leq \sqrt{\|dU\|^2_F+\|dU\|_F}$ up to $\poly(N)$, thus this proves the theorem.

\section{Diffusive dynamics of random beam-splitter arrays}
\subsection{Classical random walk behavior of random beam-splitter arrays}\label{SM:randomwalk}
Let us consider two modes $\hat{a}_k^{(D)\dagger}$ and $\hat{a}_k^{(D+1)\dagger}$ at depth $D$, which are written as
\begin{align}
    \hat{a}_k^{(D)\dagger}=\sum_{j=1}^M U_{k,j}^{(D)}\hat{a}_j^{(0)},~~~ \text{and}~~~ \hat{a}_{k+1}^{(D)\dagger}=\sum_{j=1}^M U_{k+1,j}^{(D)}\hat{a}_j^{(0)\dagger}.
\end{align}
After applying a beam splitter between them, the modes are transformed as
\begin{align}
    \hat{a}_k^{(D+1)\dagger}&=e^{i\phi_1}\left(\cos\theta\sum_{j=1}^M U_{k,j}^{(D)}\hat{a}_j^{(0)\dagger}+e^{i\phi_0}\sin\theta\sum_{j=1}^M U_{k,j}^{(D)}\hat{a}_j^{(0)\dagger} \right), \\
    \hat{a}_{k+1}^{(D+1)\dagger}&=e^{i\phi_2}\left(\cos\theta\sum_{j=1}^M U_{k+1,j}^{(D)}\hat{a}_j^{(0)\dagger}-e^{-i\phi_0}\sin\theta\sum_{j=1}^M U_{k+1,j}^{(D)}\hat{a}_j^{(0)\dagger} \right),
\end{align}
where $\cos\theta$ and $\sin\theta$ represent the beam splitter's transmissivity and reflectivity, respectively, and we obtain the following relation
\begin{align}
    U_{k,j}^{(D+1)}&=e^{i\phi_1}\left(U_{k,s}^{(D)}\cos\theta+e^{i\phi_0}U_{k+1,s}^{(D)}\sin\theta\right), \\ 
    U_{k+1,j}^{(D+1)}&=e^{i\phi_2}\left(U_{k,s}^{(D)}\cos\theta-e^{-i\phi_0}U_{k+1,s}^{(D)}\sin\theta\right).
\end{align}
After averaging the transmissivity $\cos\theta$ over the uniform distribution of $\theta\in[0,2\pi)$, and the phases $\phi_0,\phi_1$ and, $\phi_2$ over the uniform distribution of $[0,2\pi)$, we obtain
\begin{align}
    \mathbb{E}[|U_{k,s}^{(D+1)}|^2]=
    \mathbb{E}[|U_{k+1,s}^{(D+1)}|^2]
    =\frac{\mathbb{E}[|U_{k,s}^{(D)}|^2]+\mathbb{E}[|U_{k+1,s}^{(D)}|^2]}{2},
\end{align}
which shows that the transmission and reflection rate of a random beam-splitter array follows a random walk behavior.

\subsection{Proof of Lemma 2}\label{SM:leakage}
We provide the proof of Lemma 2 in this section.
We first note that on average the random beam-splitter circuits can be characterized by a symmetric random walk and that the goal is to find an upper-bound on the leakage rate.
We observe that the leakage rate assuming an infinite number of modes is always larger than one with boundaries because boundaries makes the walker return to the initial lattice.
Thus, it is sufficient to find an upper-bound assuming an infinite number of modes.

For one-dimensional random walk, the probability of propagating farther than $l$ in step $t$ is given by
\begin{align}
    P\leq 2\exp\left(-\frac{l^2}{2t}\right).
\end{align}
Thus, the leakage rate for $d$-dimensional case can be upper-bounded as
\begin{align}
    \mathbb{E}[\eta]&\leq 1-\left[1-2 \exp\left(-\frac{l^2}{2t}\right)\right]^d
    \leq 2d \exp\left(-\frac{l^2}{2t}\right),
\end{align}
where $\eta=\eta(\vec{\theta}, \vec{\phi})$ is a function of parameters $(\vec{\theta}, \vec{\phi})$ for beam splitter arrays, which are random variables following a uniform distribution on $[0,2\pi)$.

Using Markov's inequality, we obtain
\begin{align}
    P(\eta\geq a)\leq \frac{\mathbb{E}[\eta]}{a}=\frac{2d}{a} \exp\left(-\frac{l^2}{2t}\right).
\end{align}

Since we are interested in the leakage rate of a source, we set $l=\kappa L=\kappa k^{1/d}N^{(\gamma-1)/d}$ with $L=(M/N)^{1/d}$ for a $d$-dimensional case.
Taking into account the dimension of the circuit, we set $t=D/d$.
Especially for depth $D\leq dk^{2/d}c_1 \kappa^2N^{\frac{2(\gamma-1)}{d}-\epsilon}/2$ and $a=\exp(-N^{\epsilon})$ with an arbitrary $\epsilon>0$, we find
\begin{align}
    P[\eta\geq \exp(-N^{\epsilon})]\leq 2d \exp\left[\left(1-\frac{1}{c_1}\right)N^{\epsilon}\right].
\end{align}
Thus, the probability is exponentially small in $N$ when $c_1<1$.
In other words, except for an exponentially small probability for $(\vec{\theta},\vec{\phi})$, the leakage rate $\eta(\vec{\theta},\vec{\phi})$ is exponentially small.

In addition, for $d=1$, if we set $l=\kappa L \log N =\kappa k N^{\gamma-1}\log N$, we can similarly obtain for $D\leq c_1 k^2\kappa^2 N^{2(\gamma-1)-\epsilon}(\log N)^2/2$
\begin{align}
    P[\eta\geq \exp(-N^{\epsilon})]\leq 2d \exp\left[\left(1-\frac{1}{c_1}\right)N^{\epsilon}\right].
\end{align}

\section{Proof of Theorem 3}
In this section, we provide the proof of Theorem 3. First, let us consider single-photon boson sampling. By the assumption, $D\leq D_\text{easy}\equiv dk^{2/d}N^{2(\gamma-1)/d-\epsilon}/8$. 
It corresponds $\kappa=1/2$, so that $l=L/2$; thus, the sources are confined at the initial sublattices under the approximate matrix $\tilde{U}$. 
Since the photons are detected at their initial sublattices for typical circuits, the corresponding outcomes are represented by disjoint graphs with two vertices (input photon and output photon for each sublattice). 
Therefore, the graph has treewidth $w=1$. 
Thus the complexity is $O(MN^2)$ by Theorem 1. Now we check whether the error satisfies the condition $O(1/\poly(N))$. We start with the total variation distance as
\begin{align}
    \text{TVD}\leq \poly(N)\sqrt{\|dU\|^2_F+\|dU\|_F} \leq O(1/\poly(N)),
\end{align}
where we use Theorem 2 for the first equality, and Lemma 2 for the second inequality. 
Therefore, the approximate boson sampling can be efficiently performed. 

Now, let us move to the Gaussian boson sampling. The only difference is that a Gaussian source, i.e., squeezed vacuum state, can emit multiple photons so that output can be many photons even confined in the same sublattice. We still assume that this multiplicity is at most a constant $c$ without loss of generality. Then in a single sublattice, at most $c$ photons can be detected, which leads the treewidth $c-1$ because it yields the complete graph in the symmetric tree decomposition. Although each sublattice has a complete graph, there are only disjoint graphs, and total treewidth is bounded, i.e., $w \leq c-1$. Consequently, the sampling complexity is $O(MN)$ by Theorem 1, and the error analysis is the similar to the case of single-photon boson sampling.

\begin{figure}[t]
\includegraphics[width=300px]{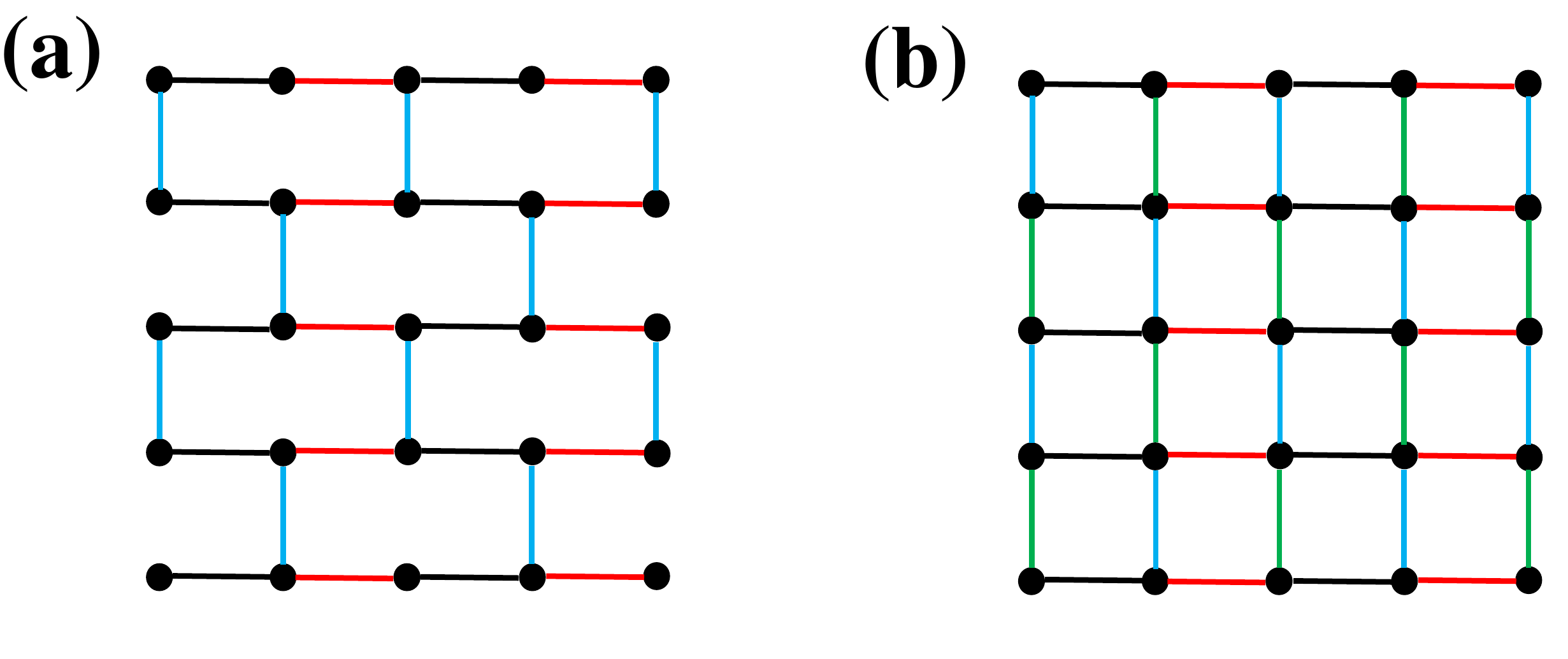}
\caption{Examples of 2D constant-depth circuit diagrams. First depth (black), second (red), third (blue), and fourth one (green). Examples of $5 \times 5$ circuit appearing (a) elementary wall in a depth-3 circuit, (b) grid graph in a depth-4 circuit. Both have unbounded treewidth $\Theta(N)$.}
\label{fig:grid}
\end{figure}

\section{Constant-depth 2D circuit}\label{SM:const}
In this section, we briefly show that for 2D structures, only constant-depth is enough to achieve unbounded treewidth of induced graphs. 
Let us focus on a dense circuit, i.e., $L=1$, in a 2D lattice, with local interactions. 
We can draw 2D circuit diagrams Fig.~\ref{fig:grid}, in which vertices represent input positions of the circuit, and edges correspond the connectivity by local interaction. 
In Fig.~\ref{fig:grid} (a), a graph with unbounded treewidth already appears for depth-3, called elementary wall~\cite{reed2003recent}. 
If we step one depth more, we can obtain a grid (Fig.~\ref{fig:grid} (b)). 
Both have the treewidth $w=\Theta(N)$ with $N$ is the height of the wall or length of the grid. 
One can find that there exist output configurations including the graph of circuit diagrams, which indicates that the treewidths of circuit diagrams are lower bounds on the maximal treewidths of the induced graphs over possible outcomes. Thus, the corresponding bipartite graph (symmetric graph) for single-photon (Gaussian) boson sampling has at least this unbounded treewidth.

It is worth mentioning that the simulability of constant-depth circuits in the literature. For  linear optical circuits with single-photon inputs, all depth-2 circuits are easy, some depth-4 circuits are hard, and the depth-3 case is unknown~\cite{constantdepthpra}. By our argument, some depth-3 circuits are hard using our treewidth-based algorithm. In addition, for a 2D Gaussian local random boson sampling, constant-depth hardness is suggested under certain conjectures~\cite{deshpande2021quantum}. For  qubit circuits, there exist depth-3 circuits with treewidth $\Omega(n)$ by using expander graph~\cite{markov2008simulating}. 

\begin{figure}[t]
\includegraphics[width=400px]{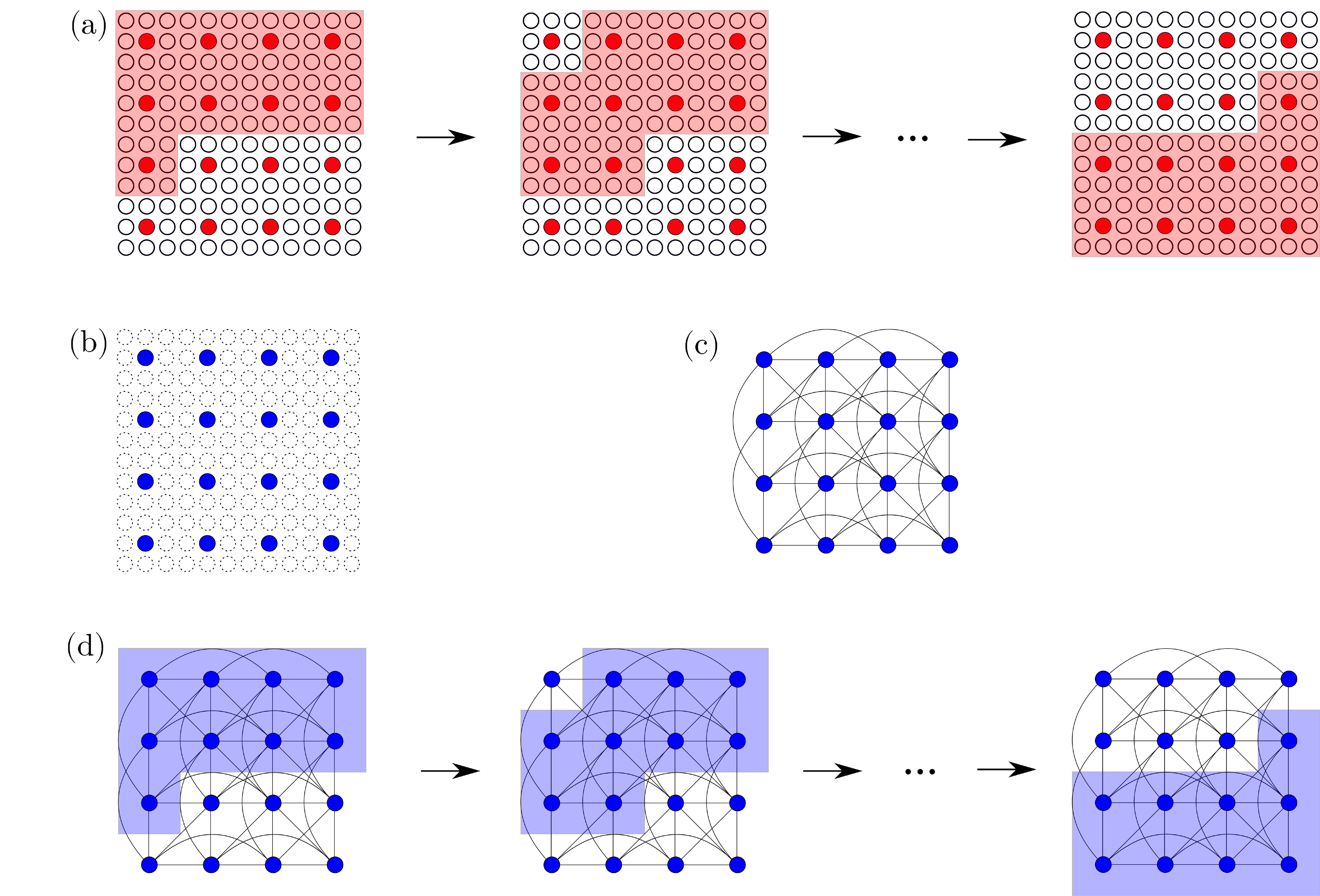}
\caption{(a) Virtual tree decomposition including all modes, which will be used to construct real bags of a symmetric tree decomposition. The former is the latter's child. Here, we assume that a photon from source can jump up to 3 modes away. $M=144$ and $N=16$ (b) An example output distribution. (c) Corresponding symmetric graph. (d) True tree decomposition taking into account true outcomes only. An upper bound on the treewidth is $w=8$.}
\label{fig:band_sym}
\end{figure}

\begin{figure}[t]
\includegraphics[width=400px]{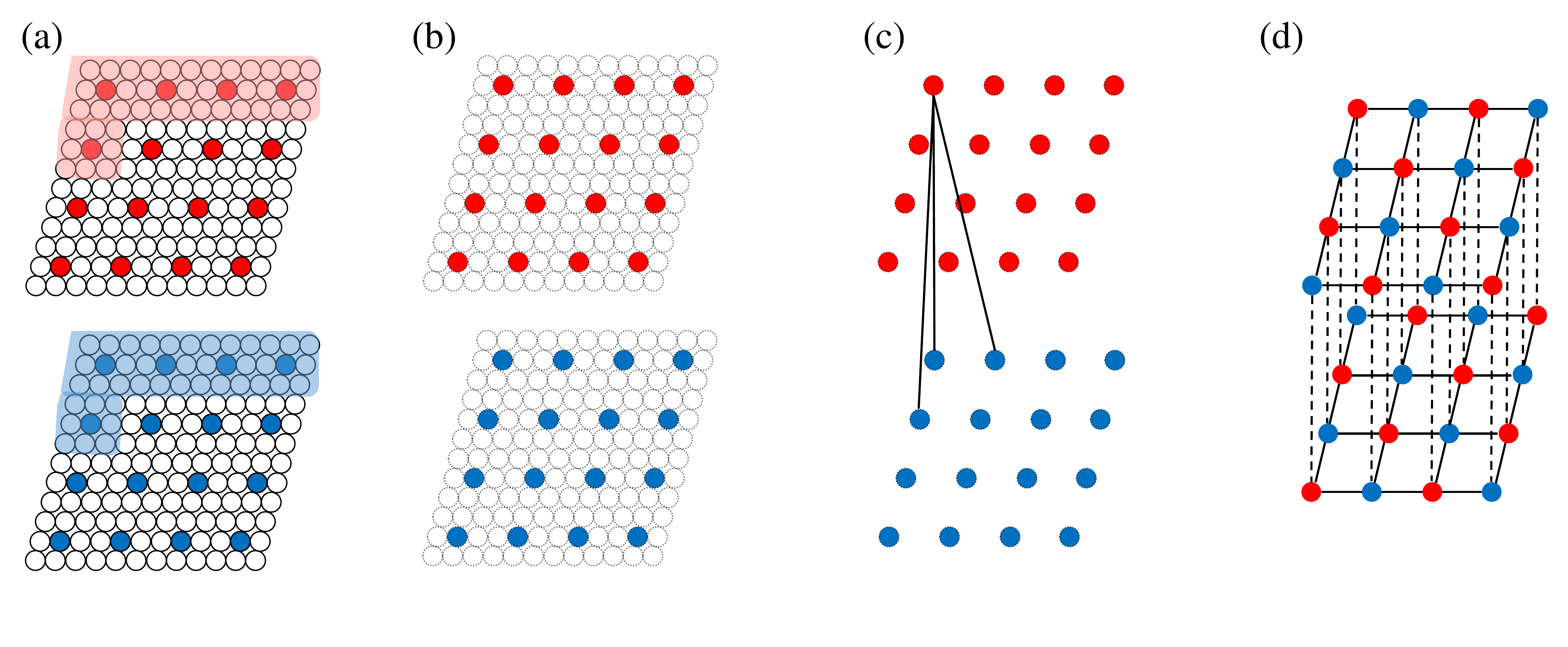}
\caption{(a) Virtual tree decomposition including all modes (red: input modes, blue: output modes), which will be used to construct real bags of a bipartite tree decomposition. Here, we assume that a photon from source can jump up to 3 modes away. The colored region represents the first bag. $M=144$ and $N=16$ (b) An example output distribution. (c) Corresponding bipartite graph. We depict edges only from the first input mode for simplicity. (d) An equivalent graph including grids by rearrangement of the vertices. An upper bound on the treewidth is $w=9$.}
\label{fig:band_bi}
\end{figure}

\section{Upper bound on the treewidth of induced graphs (Proof of Theorem 4)} 
In this section, we explicitly give examples of boson sampling in 2D and prove Theorem 4 by finding the upper bound of treewidth we need. 
Fig.~\ref{fig:band_sym} shows an example of a 2D Gaussian boson sampling and its tree decomposition.
Fig.~\ref{fig:band_sym} (a) shows a possible virtual tree decomposition before we obtain an output photon distribution.
When output photons are clicked as shown in Fig.~\ref{fig:band_sym} (b), the corresponding symmetric graph is given by Fig.~\ref{fig:band_sym} (c).
Finally, we can find the tree decomposition of an output distribution by obtaining an overlap of the virtual tree decomposition and output photons.
Here, depending on how far a photon can propagate, we have to adjust the size of each bag accordingly.
When a photon can propagate up to $\kappa L$ (assuming $\kappa$ is a nonzero integer for simplicity), for the virtual tree decomposition, the size of a bag is upper-bounded by $\sqrt{M}\times (2\kappa+1) L$, and it contains at most $\sqrt{M}/L \times (2\kappa+1) L/L=(2\kappa+1) N^{1/2}$ sources.
One can easily check that even if photons outside of a bag propagate into the bag, the number of photons that can be clicked has the same scaling as the number of sources.
Thus, the width is given by $w=\Theta(\kappa N^{1/2})$.
If we let $\kappa=\Theta(N^{\alpha/2})$, $w=\Theta(N^{\frac{1+\alpha}{2}})$.
The latter corresponds to the diffusive dynamics with $D=\Theta(N^{\alpha}D_\text{easy})$. Fig.~\ref{fig:band_bi} is for a similar case but single-photon boson sampling and bipartite tree decomposition. We remark here that our description represents an upper bound on the treewidth. The exact treewidth is hard to find in general, but we illustrate that for the 2D grid case, the band decomposition is enough for the exact treewidth, as shown in Fig.~\ref{fig:grid-tree}.

Finally, let us consider an arbitrary dimension $d$.
One may find a similar virtual tree decomposition.
For example, if a photon can jump up to $\kappa L$, the size of bags is upper-bounded by $M^{1/d}\times M^{1/d}\times \cdots \times (2\kappa+1)L=M^{(d-1)/d}\times (2\kappa+1)L$, and the number of sources is at most $(M^{\frac{d-1}{d}}\times (2\kappa+1)L)/L^d=\Theta(\kappa N^{\frac{d-1}{d}})$.
Therefore, the width is $\Theta(\kappa N^{\frac{d-1}{d}})$.
When $\kappa=\Theta(N^{\alpha/d})$, ($D=\Theta(N^{2\alpha/d} D_\text{easy})$  for diffusive dyanmics), the width is given by $\Theta(N^{\frac{\alpha}{d}+\frac{d-1}{d}})$.
It proves Theorem 4.

\begin{figure}[t]
\includegraphics[width=300px]{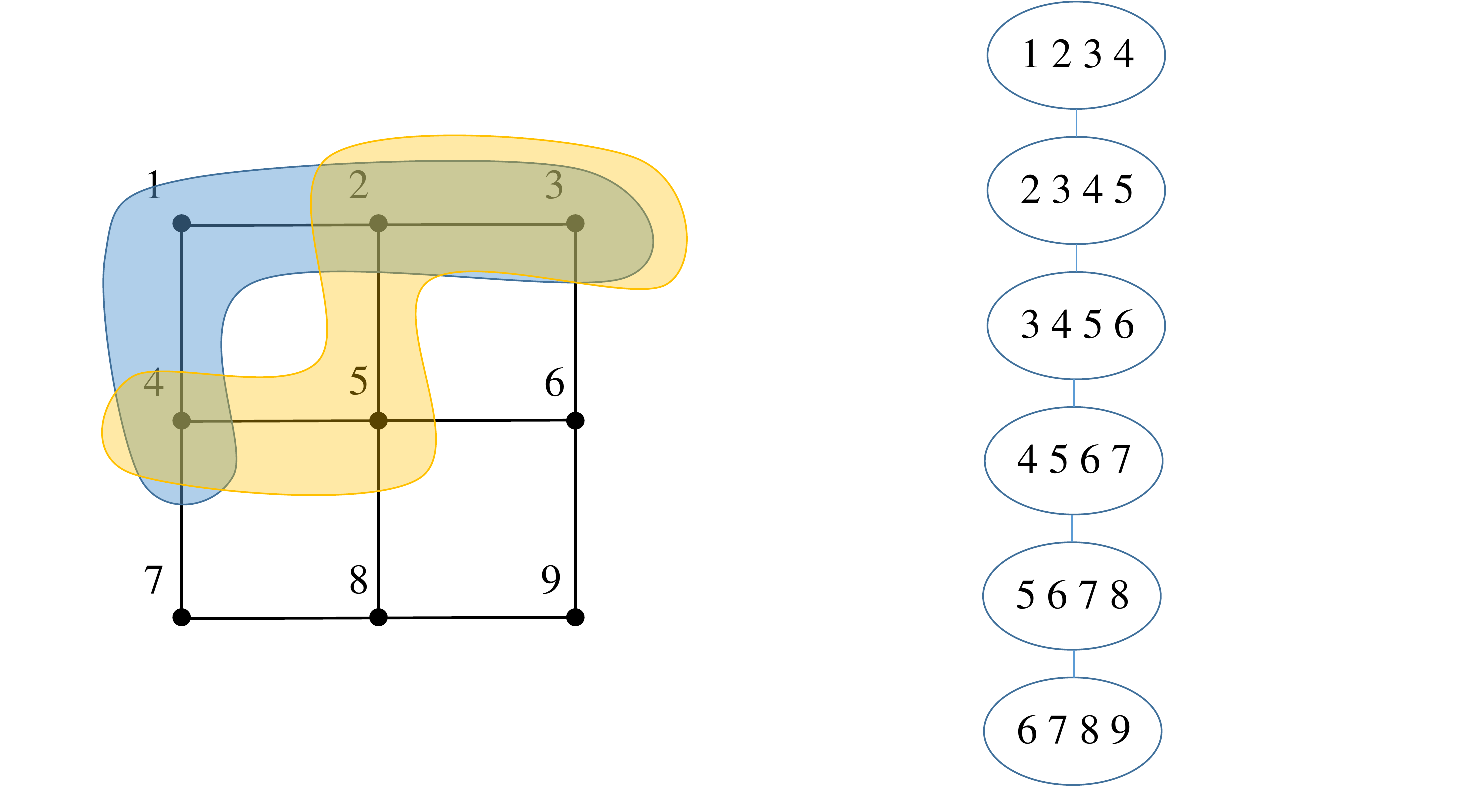}
\caption{$3 \times 3$ grid graph and its tree decomposition using band structure (left). The first bag (blue) has element $\{1,2,3,4\}$ and the second bag (yellow) has element $\{2,3,4,5\}$. The width of this tree decomposition is 3 (right), which is exactly equal to the treewidth.}
\label{fig:grid-tree}
\end{figure}

\section{Other initial configuration for boson sampling using local Haar-random beam splitters}
When we want to conduct an experiment or simulate for boson sampling that is believed to be hard \cite{aaronson2011computational}, we need to ensure collision-free cases, i.e., $M=\omega(N^2)$ or $\gamma>2$ in the main text.
We first emphasize that there is a possibility that the sampling for $M=O(N^2)$ is not as difficult as for $M=\omega(N^2)$ \cite{clifford2020faster,bulmer2021boundary}.
Let us focus on achieving such conditions with two representatives examples of different initial configurations in 2D systems, as shown in Fig.~\ref{fig:collision}. Here, we again assume local Haar-random circuits, which is motivated from the recent experimental setup \cite{zhong2020quantum, zhong2021phase}.
Also we assume that $\gamma$ is slightly larger than 2, which is the experimentally favorable and minimal condition to achieve collision-free cases.
We have shown that when $D=\Theta(N^{\gamma-\epsilon})$, any modes can interact with all modes regardless of an initial configuration.
Especially when the initial sources are equally distributed as we analyzed, the photons can be detected on all $M$ modes when $D=\Theta(D_\text{easy})=\Theta(N^{\gamma-1-\epsilon})$, i.e., all $M=\Theta(N^\gamma)$ modes are effectively involved.
Therefore, for example, when $\gamma$ is slightly greater than 2, we only need an almost linear depth of $N$ to achieve collision-free cases.
On the other hand, when all input modes are concentrated at the center or the corner of the $\sqrt{M}\times \sqrt{M}$ lattice as $\sqrt{N}\times \sqrt{N}$ sublattice as shown in Fig.~\ref{fig:collision} (a),
at $D=\Theta(D_\text{easy})$, photons can propagate up to $\Theta(N^{(\gamma-1)/2})$ and the number of effective modes is at most $\Theta(N^{\gamma-1})$.
Thus, a linear depth is not enough to achieve collision-free cases and we need a larger depth when $\gamma$ is slightly larger than 2.
To make the effective number of modes to be $M=\omega(N^2)$, the required depth is $D=\omega(N^{2-\epsilon})=\omega(N^{3-\gamma}D_\text{easy})$; the modes around edges do not effectively contribute before the depth.
A difference from the equally distributed case is that even though initial sources can interact each other at a smaller depth, since the initial sources are dense, it does not guarantee collision-free case when $D=O(N^{2-\epsilon})=O(N^{3-\gamma}D_\text{easy})$.
One can also show a similar behavior of another example shown in Fig.~\ref{fig:collision} (b). 
Finally, it is worth noting that if we do not assume local Haar-random circuits, there is an ensemble that reaches a global Haar-random and collision-free outcomes with a lower depth \cite{russell2017direct}.
Nevertheless, we emphasize again that a local Haar-random circuit is the experimental setup currently used for quantum supremacy demonstration, and that under this assumption, it is advantageous to choose the equally spacing initial state to minimize the depth for collision-free.


\begin{figure}[t]
\includegraphics[width=300px]{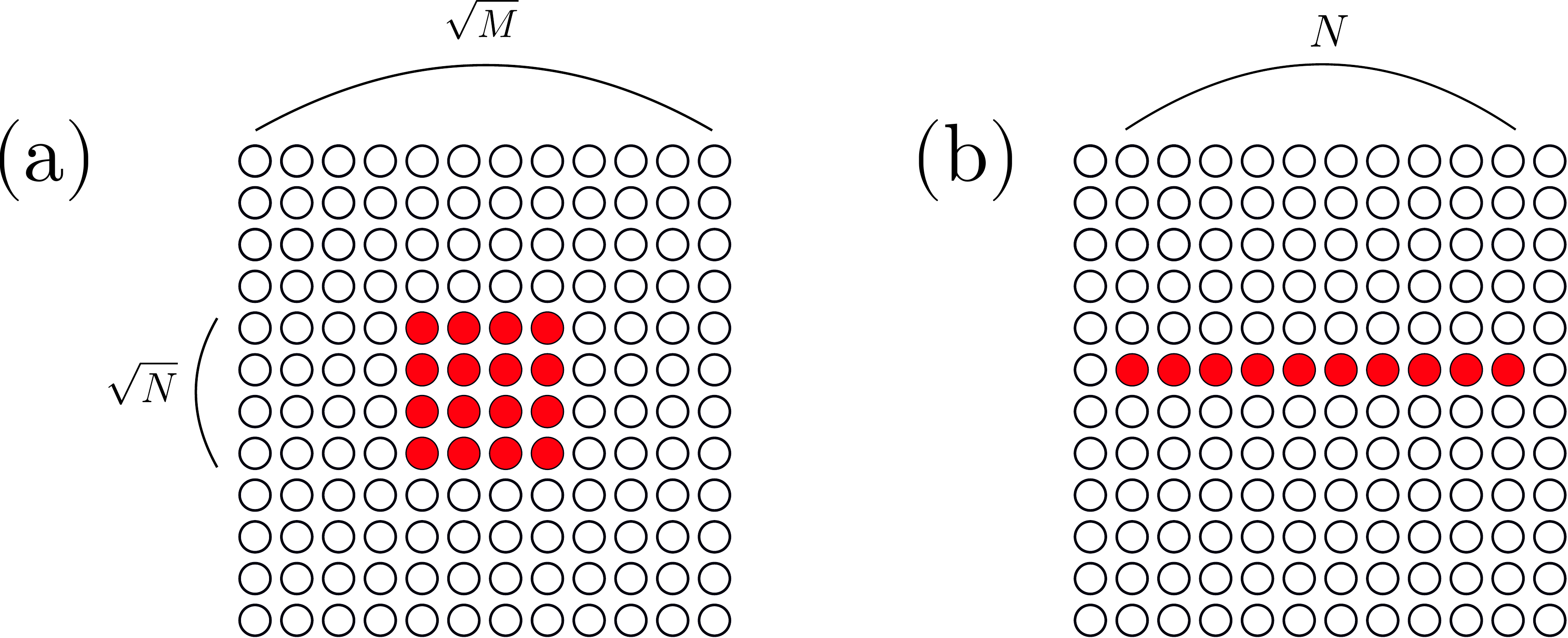}
\caption{Examples of different initial configurations. We assume $M=\Theta(N^2)$. (a) $\sqrt{N}\times\sqrt{N}$ grid shape of initial sources, (b) $1\times N$ strip shape of sources.}
\label{fig:collision}
\end{figure}

\section{Numerical method for Likelihood test}
In this section, we elaborate the numerical method that we have conducted to show that our treewidth-based algorithm gives a larger likelihood than GBS experiments.
The recent GBS experiment has used two layers of 2-dimensional beam-splitter arrays for 144 modes.
To reveal the locality of the given circuit, we first rearranged the 144 modes one-dimensionally as shown in Fig. 5(a).
In this procedure, we have randomly shuffled the input and output modes to find the best configuration for local approximation.
For each case, we approximate the unitary matrix of the circuit by imposing a spatial locality, i.e., photons from a source can propagate only for $K$ steps.
It means that we discard some elements of the matrix that correspond to the coupling between modes over than $K$ steps from each source.
We then choose a configuration that gives the best fidelity between an approximated Gaussian state with the propagation length $K$ and the ideal Gaussian state.
For this procedure, because the approximation effectively decreases the number of photons, we have increased the squeezing parameters and also thermal photons of the input state to have the same mean photon number as the experiment.
We note that the likelihood test is equivalent to the test implemented in Refs.~\cite{zhong2020quantum, zhong2021phase}.

More specifically, we compute the log-likelihood ratio:
\begin{align}
    \text{ratio}
    \equiv\log\frac{\text{Pr}_\text{ideal}(\text{Samples from experiment})}{\text{Pr}_\text{ideal}(\text{Samples from treewidth algorithm})}
    =\sum_{i=1}^{N_\text{samp}}\log\frac{\text{Pr}_\text{ideal}(\bm{m}^{(i)}_\text{exp})}{\text{Pr}_\text{ideal}(\bm{m}^{(i)}_\text{tree})},
\end{align}
where samples from experiments are written as $\{\bm{m}_\text{exp}^{(i)}\}_{i=1}^{N_\text{samp}}$, and samples from treewidth algorithm are written as $\{\bm{m}_\text{tree}^{(i)}\}_{i=1}^{N_\text{samp}}$, and $N_\text{samp}$ is the number of samples.

For the numerical demonstration, we have chosen the experimental data of focal waist $65\mu m$ and power $P=0.15,0.3,0.6,1.0,1.65~W$.
In the main text, we have presented the first two cases, which are classically verifiable since the classical simulation is possible.
For the larger powers, we have tested for the many different randomly chosen marginal modes and averaged the score, and the results are presented in Fig.~\ref{fig:HOG_SM}.
First, we observe how the score changes for the classically verifiable cases in Fig.~\ref{fig:HOG_SM}(a)(b).
While the overall behavior is not monotonic, the largest likelihood ratio is attained for the smallest number of marginal modes.
Now for the quantum supremacy regime, the behavior is almost consistent with the classically verifiable cases in the sense that the score decreases up to 60 marginal modes.
Due to the computational cost, we cannot test for larger number of marginals and ultimately for full distribution, but based on the classically verifiable cases, we expect that our classical algorithm may have a better likelihood than the experiment.
It is worth emphasizing that we had to choose a larger $K$ to attain a larger likelihood from our treewidth-based algorithm as the power grows, which is due to the fact that for a large number of photons, more photons can propagate further even under the same unitary matrix, so that the approximation error inevitably increases (e.g. Eq.~\eqref{eq:fid_bound}).

\begin{figure}[t]
\includegraphics[width=480px]{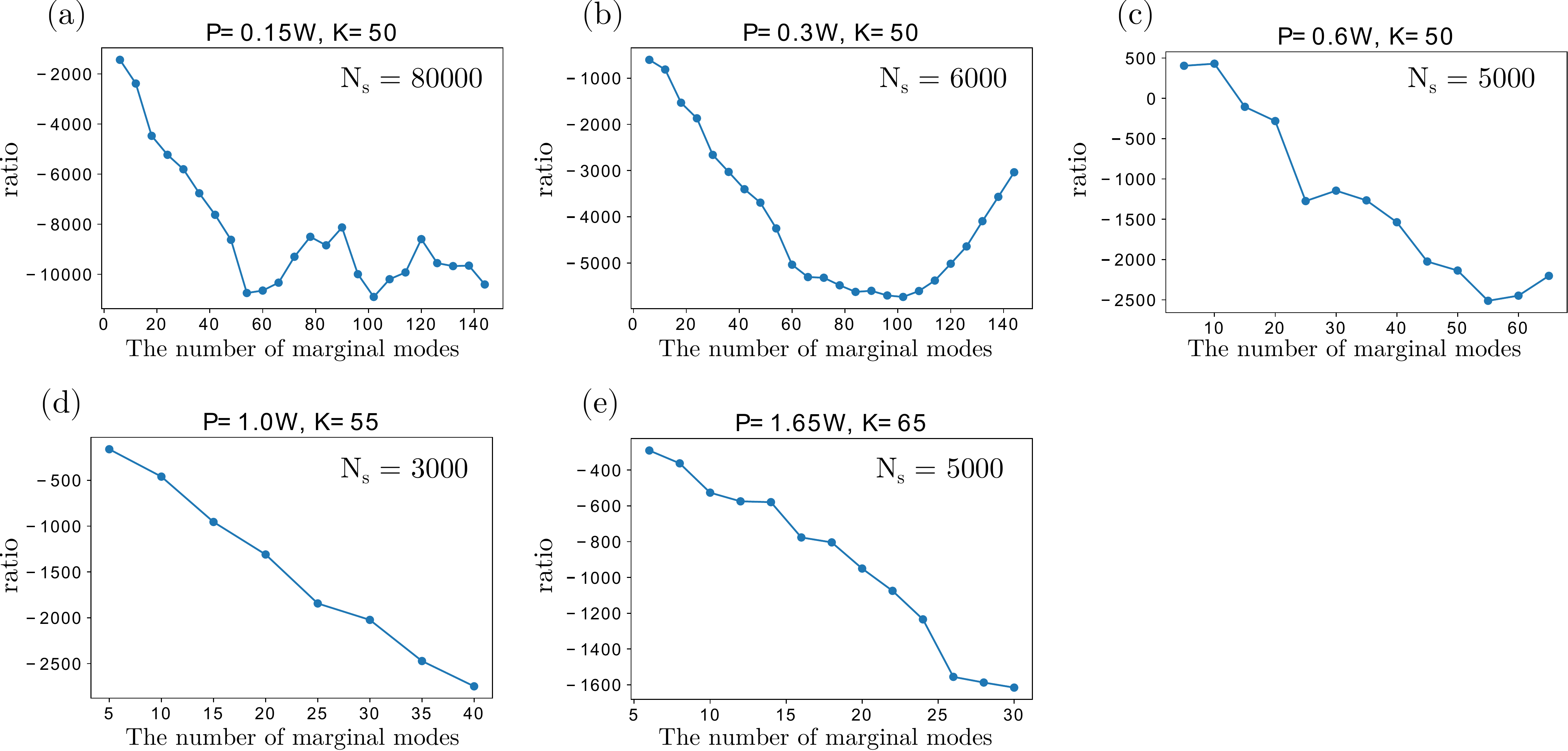}
\caption{Log-likelihood ratio for different powers and different number of marginal modes. We have averaged over 60, 20, 10, 30, 20 different choices of marginal modes for each point for (a)-(e), respectively.}
\label{fig:HOG_SM}
\end{figure}

\bibliography{reference}